\newcommand{\cS}{{\cal S}}
\newcommand{\flo}[1]{\lfloor #1 \rfloor}
\newcommand{\poly}{\textrm{poly}}
\newcommand{\otilde}{\widetilde{O}}
\newcommand{\qed}{\hfill $\Box$}
\newcommand{\eps}{\varepsilon}
\newcommand{\prob}{{\rm Prob}}
\newcommand{\defect}{\mbox{\rm defect}}
\newcommand{\dmin}{\mbox{\rm min-defect}}
\begin{document} 

\title{\Large Space efficient streaming algorithms for the distance to monotonicity 
and asymmetric edit distance}


\author{Michael Saks\thanks{This work was supported in part by NSF under CCF 0832787.}
  \\\\ {\tt saks@math.rutgers.edu} \\
Dept. of Mathematics \\ Rutgers University 
\and C. Seshadhri\thanks{This work was supported by the Early Career LDRD program at Sandia
National Laboratories.}\\\\ {\tt scomand@sandia.gov}\\
Sandia National Labs\thanks{Sandia National Laboratories is a multi-program laboratory managed and operated by Sandia Corporation, a wholly owned subsidiary of Lockheed Martin Corporation, for the U.S. Department of Energy's National Nuclear Security Administration under contract DE-AC04-94AL85000.}\\}

\date{}

\maketitle


\maketitle

\begin{abstract} 
\small\baselineskip=9pt
Approximating the length of the longest increasing sequence (LIS) of an array is a well-studied problem.  We study
this problem in the data stream model, where the algorithm is allowed to make a single left-to-right pass
through the array and the key resource to be minimized is the amount of additional memory used.  
We present an algorithm which, for any $\delta > 0$, given streaming access to an array of length $n$ provides
a $(1+\delta)$-multiplicative approximation to the \emph{distance to monotonicity} ($n$ minus the length of the LIS), and uses only $O((\log^2 n)/\delta)$ space. The previous best known approximation using polylogarithmic space was a multiplicative $2$-factor.   The improved approximation factor reflects a qualitative difference between our
algorithm and previous algorithms: previous polylogarithmic space
algorithms could not reliably detect increasing subsequences
of length as large as $n/2$, while ours can detect increasing subsequences of length $\beta n$ for any
$\beta >0$.  More precisely,  our algorithm can  be used to estimate the length of the LIS to within
an additive $\delta n$ for any $\delta >0$ while previous algorithms could only achieve additive error
$n(1/2-o(1))$. 

Our algorithm is very simple, being just 3 lines of pseudocode, and has a small
update time. It is essentially a polylogarithmic space approximate
implementation of a classic dynamic program that computes the LIS.

We also show how our technique can be applied
to other problems solvable by dynamic programs. 
For example, we give a streaming algorithm
for approximating $LCS(x,y)$, the length of the 
longest common subsequence between strings $x$ and $y$, each of length $n$.  Our
algorithm works 
in the asymmetric setting (inspired by \cite{AKO10}),
in which we have random access to $y$ and streaming access to $x$,
and runs in small space provided that no single symbol appears very often in $y$.
More precisely,
it gives an additive-$\delta n$
approximation to $LCS(x,y)$ (and hence also to $E(x,y) = n-LCS(x,y)$, the edit distance between $x$ and $y$
when insertions and deletions, but not substitutions, are allowed), with space complexity
$O(k(\log^2 n)/\delta)$, where $k$ is the maximum number of times any
one symbol appears in $y$.

We also provide a deterministic 1-pass streaming algorithm that outputs a $(1+\delta)$-multiplicative
approximation for $E(x,y)$ (which is also an additive $\delta n$-approximation), in the asymmetric setting,
and uses $O(\sqrt{(n\log n)/\delta})$ space. 
All these algorithms are obtained by carefully trading space and accuracy
within a standard dynamic program.

\end{abstract}

\section{Introduction} \label{sec:intro} 

Two classic optimization problems concerning subsequences (substrings) of arrays (strings) are the longest increasing subsequence (LIS) and longest common subsequence (LCS) problems. A string of length $n$ over alphabet
$\Sigma$ is represented as a function $x:[n] \rightarrow \Sigma$.
A subsequence of length $k$ is a string $x(i_1)x(i_2)\ldots x(i_k)$, where $1 \leq i_1 < i_2 < \cdots < i_k \leq n$.
In the LIS problem,  the alphabet $\Sigma$ comes equipped with a (total
or partial) order $\triangleleft$, and we look for the longest subsequence
whose terms are in increasing order.  In the LCS problem we are given two strings $x$ and $y$
and look for the longest string which is a subsequence of each of them.
Note that the LIS of $x$ is the LCS of $x$ and its sorted version.

Both of these problems can be solved by dynamic programs. The LIS
can be found on $O(n\log n)$ time \cite{S61, F75, AD99}.
This is known to be optimal, even for (comparison based) algorithms that only
determine the \emph{length} of the LIS \cite{Ram97}.
The LCS problem has a fairly direct $O(n^2)$ algorithm \cite{CLRS},
which can be improved to $O(n^2/\log^2n)$ \cite{MP80,BFC08}. It is a notoriously 
difficult open problem to improve this bound, or prove some matching lower bounds.

%

It is often natural to focus on the complements of the LIS and LCS lengths,
which are related to some notion of distances between strings.
The \emph{distance to monotonicity} of (the length $n$ string) $x$, denoted $DM(x)$
is defined to be $n-LIS(x)$, and is the
the minimum number of values that need to be changed to make
$x$. The (insertion-deletion) edit distance of (two length $n$ strings) $x,y$, 
denoted $E(x,y)$ is defined to be $n-LCS(x,y)$ and is the 
minimum number of insertions and deletions needed to change one string into the other.
(Note that $E(x,y)$ is bounded between $L(x,y)$ and $2L(x,y)$ where
$L(x,y)$ is the Levenshtein distance, where insertions, deletions, and substitutions are allowed.)
Of course the algorithmic problems of exactly computing $LIS(x)$ and $DM(x)$ are equivalent, but approximating
them can be very different. 

%
%
%
%

In recent years, there has been a lot of attention on giving
approximate solutions for LIS and LCS that are much more efficient
that the basic dynamic programming solutions.
Any improved results for LCS would be very interesting, since the best
known quadratic time solution is infeasible for very large strings. These
problems can be studied in a variety of settings - sampling, streaming,
and communication. The streaming setting
has been the focus of many results \cite{GJKK07,SW07,GG07,EJ08}. The model
for the LIS is that we are allowed one (or constant) passes over the input
string $x$, and only have access to sublinear storage. 

The usual formulation of LCS in the streaming model postulates that we have only one-way access to both 
strings $x$ and $y$. We consider an alternative \emph{asymmetric} model in which we have one-way access
to string $x$ (called the \emph{input string}) but random accesss to string $y$ (called the \emph{fixed string}).  This model is more powerful that the standard one,
but it is still far from clear how to obtain space efficient approximations
to $E(x,y)$ in this model.  (This model was inspired by recent work of \cite{AKO10} concerning the time
complexity of approximating edit distance in the random access model. One part of their work introduced
an asymmetric version of the random access model in which one pays only for accesses to
one of the strings, and established time lower bounds for good approximations that hold
even in this more powerful model.)

\vspace{-5pt}
\subsection{Results} \label{sec:results}

Our first result is a streaming algorithm for approximating the distance
to monotonicity. 

\begin{theorem} \label{thm:lis} There is a randomized one-pass streaming algorithm that for any $\delta>0$, takes
as  input an array (of length $n$), makes one-pass through the array, uses space $O(\delta^{-1}\log^2 n)$
and with error probability $n^{-\Omega(1)}$ outputs an estimate to $DM(x)$
that is between $DM(x)$ and $(1+\delta)DM(x)$. 
\end{theorem}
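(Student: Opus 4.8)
The plan is to approximate, in small space, the classical patience-sorting / dynamic program that maintains, after reading a prefix of $x$, the current ``frontier'' of achievable increasing subsequences. Recall that the exact DP keeps, for each length $\ell$, the smallest possible ending value of an increasing subsequence of length $\ell$ among the elements seen so far; equivalently it maintains a monotone staircase in the (position-so-far, value)-plane. Storing the whole staircase costs linear space, so the first step is to replace ``length $\ell$'' by a geometrically (or arithmetically, with spacing $\delta n$) spaced set of target lengths, so that only $O(\delta^{-1}\log n)$ pivot values need to be tracked at any time; a new arriving symbol $x(i)$ is used to update each tracked pivot in the obvious greedy way. This is exactly the ``polylogarithmic space approximate implementation of a classic dynamic program'' advertised in the introduction, and it is where the $3$-line pseudocode lives.

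Next I would argue correctness of this deterministic skeleton: since we only coarsen the length axis, the estimate $\widehat{LIS}$ it produces never exceeds the true $LIS(x)$, and a standard exchange/charging argument shows it loses at most one spacing unit per ``level'', giving $\widehat{LIS}(x) \ge LIS(x) - \delta n$ (or $LIS(x) \ge \widehat{LIS}(x) \ge (1-\delta)LIS(x)$ in multiplicative form). Passing to complements, the output $\widehat{DM} = n - \widehat{LIS}$ satisfies $DM(x) \le \widehat{DM}(x) \le DM(x) + \delta n$. The subtle point is that an additive $\delta n$ bound on $DM$ is \emph{not} the same as a multiplicative $(1+\delta)$ bound: when $DM(x)$ is much smaller than $n$, the additive slack is too generous. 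So the algorithm must be run at a finer resolution whenever $DM$ is small — i.e. whenever $LIS$ is close to $n$ — and I expect this is the main technical obstacle: one needs to detect long increasing subsequences ($LIS \ge \beta n$) reliably, which is precisely the qualitative strength of this algorithm over prior work. The natural fix is a doubling/guessing scheme on the value of $DM(x)$: run $O(\log n)$ parallel instances with accuracy parameters $\delta, \delta/2, \delta/4, \ldots$, or equivalently one instance whose spacing adapts as the running frontier grows; each instance still uses $O(\delta^{-1}\log^2 n)$ space and we take the tightest consistent estimate at the end.

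Finally I would address randomness and the claimed failure probability $n^{-\Omega(1)}$. The deterministic skeleton above actually has no error, but to control space one may want to subsample positions or hash values when the frontier is dense; any such sampling step is analyzed by a Chernoff/union bound over the $O(\delta^{-1}\log n)$ tracked pivots and $O(\log n)$ scales, driving the total error probability to $n^{-c}$ for a constant $c$ we can make as large as we like at the cost of a constant factor in space. Putting the pieces together: space $O(\delta^{-1}\log n)$ for one instance, an extra $\log n$ factor for the scale/guessing layer, giving the stated $O(\delta^{-1}\log^2 n)$; one left-to-right pass; and an output provably sandwiched between $DM(x)$ and $(1+\delta)DM(x)$ except with probability $n^{-\Omega(1)}$. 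The hardest step to get exactly right is the interplay between the additive guarantee that the coarsened DP gives directly and the multiplicative guarantee the theorem demands — i.e. making the resolution adapt to the (unknown) magnitude of $DM(x)$ without blowing up the space.
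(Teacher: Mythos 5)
Your proposal diverges fundamentally from the paper's algorithm, and the two places where you yourself flag difficulty are exactly where the argument breaks down. First, the ``deterministic skeleton'': you propose to coarsen the length axis of the patience-sorting DP to $O(\delta^{-1}\log n)$ pivot lengths and update each pivot ``in the obvious greedy way.'' There is no such obvious update. The patience-sorting frontier advances one length at a time; to maintain the minimum ending value at length $k\Delta$ after a new symbol arrives you need to know the state at length $k\Delta-1$, and recovering the grid values from a coarse grid requires rerunning the full DP over blocks of the stream. This is precisely why the Gopalan et al.\ algorithm that sparsifies the patience-sorting array needs $\Theta(\sqrt{n})$ space (it balances block length against number of pivots), and why the paper does \emph{not} sparsify the length axis at all. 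Instead it sparsifies the \emph{position} axis of the other standard DP (the one storing $r(i)\approx DM(x[1,i])$ for prefixes), randomly forgetting index $i$ so that it survives to time $t$ with probability roughly $\tfrac{1+\delta}{\delta}\log(n/\gamma)/(t-i)$.

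Second, and more seriously, your route from an additive-$\delta n$ bound on $DM$ to the required multiplicative $(1+\delta)DM(x)$ bound does not close. Your doubling scheme runs instances at accuracies $\delta,\delta/2,\delta/4,\ldots$, but to certify a multiplicative guarantee when $DM(x)=d$ you need additive accuracy $\delta d$, i.e.\ an instance with parameter $\delta'=\delta d/n$, costing space $\Theta((\delta')^{-1}\log^2 n)=\Theta(\delta^{-1}(n/d)\log^2 n)$ in your own accounting --- near-linear when $d=\mathrm{polylog}(n)$. No choice of $O(\log n)$ scales avoids this, because finer resolution is strictly more expensive in your framework. The paper gets the multiplicative guarantee directly, with no guessing layer: it fixes an optimal chain $C$, calls an interval $[i,j]$ \emph{dangerous} if the chain occupies at most a $\delta/(1+\delta)$ fraction of its weight, shows the union $B$ of a maximal disjoint family of dangerous intervals has weight at most $(1+\delta)w(\bar C)$, and proves that for any non-dangerous $[i,t]$ the chain elements inside it are dense enough that, under the chosen forgetting probabilities, at least one is still remembered at time $t$ except with probability $\gamma/4t^3$. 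Hence all the loss is confined to $B$, giving output at most $(1+\delta)DM(x)$ with high probability; the randomness is the heart of the algorithm, not (as in your third paragraph) an optional subsampling bolted onto a deterministic core. To repair your write-up you would essentially have to abandon the coarsened patience-sorting frontier and adopt a position-sparsified prefix DP with loss charged against the optimal chain, which is the paper's construction.
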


Previously there was a polylogarithmic time algorithm that gave a factor 
$2$-approximation \cite{EJ08}, and an algorithm that gave arbitrarily
good multiplicative approximations to $LIS(x)$ (which is harder
than approximating $DM(x)$) but required
$\Omega(\sqrt{n})$ space \cite{GJKK07}. 

The improvement in the approximation ratio from 2 to $1+\delta$ (for polylogarithmic space algorithms)
is not just ``chipping away'' at a constant, but provides a significant qualitative difference: 
previous polylogarithmic space algorithms might return an estimate of 0 when the LIS length is $n/2$, while
our algorithm can detect increasing subsequences of length a small fraction of $n$.  More precisely,
it is easy to see that if $V$ is an estimate of $DM(x)$ that is between $DM(x)$ and $(1+\delta)DM$ then $n-V$
is within an additive $\frac{\delta}{1+\delta} n$ of $LIS(x)$, and so our algorithm can provide
an estimation interval for $LIS(x)/n$ of arbitrarily small width.
The previous polylogarithmic time streaming algorithm only gave such an algorithm for $\delta \geq 1$, which
only guarantees an estimation interval for $LIS(x)/n$ of width 1/2.

The algorithm promised by
Theorem \ref{thm:lis} is derived as a special case of a more general algorithm (Theorem \ref{thm:main}) that
finds increasing sequences
in partial orders. This algorithm will also be applied to give a good (additive) approximation algorithm for edit distance
in the asymmetric setting, whose space is polylogarithmic  in the case that no symbol appears many times
in the fixed string.
\vspace{-3pt}
\begin{theorem} \label{thm:lcs-main}
Let $\delta \in (0,1]$.
Suppose $y$ is a fixed string of length $n$ and $x$  an input string of length $n$ to which we have streaming access. 
\begin{enumerate}
\item
There is a randomized algorithm that makes one pass through $x$ and, with error probability
$n^{-\Omega(1)}$,  
outputs an additive $\delta n$-approximation to $E(x,y)$ and uses
space $O(k\log^2 n/\delta)$ where $k$ is the maximum number of times any symbol
appears in $y$.
\item 
There is a deterministic algorithm that runs in space 
$O(\sqrt{(n\log n)/\delta})$-space and outputs a $(1+\delta)$-multiplicative (which is also a $\delta n$-additive) approximation to $E(x,y)$.
\end{enumerate} 
\end{theorem}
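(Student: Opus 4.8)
\noindent\emph{Part 1 (reduction to longest chains).} The plan is to recast $LCS(x,y)$ as a longest-chain problem in a partial order and invoke Theorem~\ref{thm:main}. Introduce the poset $\cP$ whose ground set is the set of matches $M=\{(i,j):x(i)=y(j)\}$, ordered by $(i,j)\preceq(i',j')$ iff $i\le i'$ and $j\le j'$. Common subsequences of $x$ and $y$ correspond exactly to chains of $\cP$, so $LCS(x,y)$ equals the longest chain length and $E(x,y)=n-LCS(x,y)$ is its distance to monotonicity. We feed $\cP$ to the algorithm of Theorem~\ref{thm:main} in $n$ batches: on reading $x(i)$ we emit $M_i=\{(i,j):y(j)=x(i)\}$, obtained from a preprocessed list of the occurrences of each symbol in $y$ (legitimate because we have random access to the fixed string $y$ and charge only for the working memory used on the stream of $x$). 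All elements of $M_i$ share their first coordinate and so are pairwise incomparable; hence every chain meets each batch at most once, the longest chain has length at most $n$, and $|M_i|\le k$. Theorem~\ref{thm:main} then runs in space $O(k\log^2 n/\delta)$ and, with failure probability $n^{-\Omega(1)}$, returns an estimate of the distance to monotonicity of $\cP$, i.e.\ of $E(x,y)$, within $\delta n$ additively (equivalently, within a $(1+\delta)$ factor of $n$ minus the longest chain, which is also a $\delta n$-additive estimate of $E$). The only things to verify are the batch-incomparability remark, so chain length never exceeds $n$, and that the per-batch size bound $k$ feeds correctly into the space bound of Theorem~\ref{thm:main}.

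\smallskip
\noindent\emph{Part 2 (coarsening a standard DP).} For the deterministic algorithm I would directly and carefully coarsen the classical row dynamic program for $LCS$. Streaming through $x$, after reading $x[1..i]$ one conceptually holds the increasing array $c_i(1)<c_i(2)<\cdots<c_i(L_i)$, where $c_i(v)$ is the least $\ell$ with $LCS(x[1..i],y[1..\ell])=v$ and $L_i=LCS(x[1..i],y)$; reading $x(i)$ updates it by $c_i(v)=\min\bigl(c_{i-1}(v),\ \mathrm{succ}_{P_i}(c_{i-1}(v-1))\bigr)$, where $P_i=\{\ell:y(\ell)=x(i)\}$ and $\mathrm{succ}$ is read off the preprocessed occurrence lists of $y$. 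Storing all $L_i\le n$ entries costs linear space, so run two computations in parallel, each in space $O(\sqrt{(n\log n)/\delta})$: (i) the exact DP restricted to the band $|i-j|\le w$ with $w=\Theta(\sqrt{(n\log n)/\delta})$, which by the standard Ukkonen-style banding fact (all matches of an optimal common subsequence lie within $|i-j|\le E(x,y)$) computes $E(x,y)$ exactly whenever $E(x,y)\le w$, and in all cases produces a value $\hat E_{\mathrm{band}}\ge E(x,y)$ with $\hat E_{\mathrm{band}}\le w \iff E(x,y)\le w$; and (ii) a coarsened DP that stores $c_i$ only at levels that are multiples of $g=\Theta(\sqrt{n\delta/\log n})$ — so $O(n/g)=O(\sqrt{(n\log n)/\delta})$ numbers — and processes $x$ in blocks of $g$ symbols, replacing each unavailable intermediate entry by the nearest stored smaller value. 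At the end output $\hat E_{\mathrm{band}}$ if $\hat E_{\mathrm{band}}\le w$ (exact) and otherwise the coarse estimate; since $w\ge g/\delta$, in the latter regime $E(x,y)>w$ forces $g\le\delta E(x,y)$, so a coarse error of $O(g)$ yields a $(1+\delta)$-multiplicative (hence also $\delta n$-additive) estimate of $E(x,y)$.

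\smallskip
\noindent\emph{Main obstacle.} The step I expect to be the crux is the error analysis of the coarsened DP in Part~2: a naive bound charges up to $g$ error per block and hence $O((n/g)\cdot g)=O(n)$ overall, which is useless. One must instead maintain an invariant — roughly, that the stored sparse row is always sandwiched between the true row and the true row shifted down by $O(g)$ in value — show it is preserved by the block update (using $0/1$-monotonicity of the DP row and the fact that $L$ grows by at most $g$ per block of $g$ symbols), and argue that the shift does not accumulate across the $\Theta(n/g)$ blocks, so the final estimate of $LCS(x,y)$ is off by only $O(g)$. Part~1 is essentially bookkeeping once Theorem~\ref{thm:main} is in hand.
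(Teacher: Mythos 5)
Your reduction is the same one the paper uses, but you have skipped the one step that actually requires an argument, and in doing so you assert something false. The $P$-sequence $\sigma$ you feed to the AMDP algorithm has length $|P|\le nk$, so its minimum defect is $m=|P|-LCS(x,y)$, \emph{not} $n-LCS(x,y)=E(x,y)$; your phrase ``the distance to monotonicity of $\cP$, i.e.\ of $E(x,y)$'' is incorrect whenever some symbol repeats. Theorem~\ref{thm:main} returns $A\in[m,(1+\delta')m]$, so the additive error is up to $\delta' m\le \delta' nk$. To recover a $\delta n$-additive estimate of $E(x,y)$ one must (i) output $A+n-|P|$ rather than $A$, and (ii) run the AMDP algorithm with approximation parameter $\delta'=\delta/k$. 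This rescaling is the sole source of the factor $k$ in the space bound $O(k\log^2 n/\delta)$ --- it does not come from the per-batch size $|M_i|\le k$, which only enters through $\rho\le nk$ and hence the logarithmic factors. As written, your argument neither yields the claimed accuracy nor the claimed space bound; the missing computation is exactly the content of the paper's Theorem~\ref{thm:lcs-app}.

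\textbf{Part 2.} Here your route genuinely diverges from the paper's, and the step you yourself flag as the crux is the entire theorem. The paper does not uniformly coarsen the DP by a fixed granularity $g$; it discretizes the $y$-coordinate with \emph{geometric} anchor sets $S_i=\{i\bar n+b(1+\mu)^r\}$ around each block boundary and restricts to $\cS$-consistent common subsequences. The point of the geometric spacing is that the rounding loss in block $p$ is at most $\mu(1+\mu)^r$ where, by Claim~\ref{clm:defect-gap}, $(1+\mu)^r\le E(x,y)$; hence each block costs at most $\mu E(x,y)$ and the total over $n/\bar n$ blocks is $\delta E(x,y)$ --- the error is self-scaling with the defect, which is what makes the approximation multiplicative. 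A uniform coarsening by $g$ has no such property: each block can genuinely lose $\Theta(g)$ matches (the true row values can sit strictly between stored levels in every block), so the $O(g)$ ``shift'' does accumulate to $\Theta(n)$ over $\Theta(n/g)$ blocks, and I do not see an invariant that prevents this. The band-DP half of your construction is fine but only handles the easy regime $E(x,y)\le w$; the other half, as proposed, does not prove the bound. You would need to replace the uniform grid by a defect-adaptive (geometric) one, which is the paper's Lemma~\ref{lem:proper}.
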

%

\vspace{-5pt}
\subsection{Techniques} \label{sec:tech}

A notable feature of our algorithm is its conceptual simplicity.  The pseudocode for
the LIS approximation is just
a few lines. The algorithm has parameters $\alpha(i,t)$ for $1 \leq i < t$) whose 
exact formula is
a bit cumbersome to state at this point. We set $\alpha(i,t)$ to be 0 for $i \geq t-O(\log(n))$
and approximately $1/(t-i)$ otherwise.
The algorithm maintains a set of indices $R$, and for each $i \in R$, we store $x(i)$ and
an \emph{estimate} $r(i)$ of $DM(x[1,i])$, where $x[1,i]$ is the length $i$ prefix of $x$.
For convenience, we add dummy elements $x(0) = x(n+1) = -\infty$ and
begin with $R = \{0\}$. For each time $t \geq 1$, we perform the following update:
\vspace{5pt}
\begin{asparaenum}
\item Define $R' = \{i \in R | x(i) \leq x(t)\}$. Set $r(t)=\min_{i \in R'}(r(i)+t-1-i)$.
\item $R \longleftarrow R \cup \{t\}$.
\item Remove each $i \in R$ independently with probability $\alpha(i,t)$.
\end{asparaenum}
\vspace{5pt}
The final output is $r(n+1)$.  

The space used by the algorithm is (essentially) the maximum size of $|R|$.
The update time is determined by step 1, which runs in time $O(|R|)$.
Without the third step, the algorithm is a simple quadratic time exact algorithm
for $DM(x)$ using linear space. (The $O(n\log n)$ time algorithms \cite{F75,AD99} also work in
a streaming fashion, but store data much more cleverly.)
More precisely, at step $t$, $R=\{0,\ldots,t\}$ and 
for each $i \leq t$, $r(i)=DM(x[1,i])$.  

The third step reduces the set $R$, thereby reducing the space of the algorithm.
The space used by the algorithm is (essentially) the maximum size
of $R$.
Intuitively, the algorithm ``forgets''  $(x(i),r(i))$ for those $i$ removed
from $R$. The set $R$ of remembered indices is a subset of $[1,t]$ whose density decays as one goes back in time
from the present time $t$.  When we compute $r(t)$, it may no longer be equal to the distance
to monotonicity of the prefix of $x$ of length $i$, but it will be at least this value.  
This forgetting strategy is tailored to ensure that $r(t)$ is also at most
a $(1+\delta)$-factor away from the distance to monotonicity.
We also ensure that that (with high probability) the set $R$ does not exceed size $O(\delta^{-1}\log^2 n)$  

Just to give an indication of the difficulty, consider
an algorithm that forgets uniformly at random. At some time $t$, the set
of remembered indices $R$ is a uniform random set of size $O(\delta^{-1}\log^2 n)$ up to index $t$.
These are used to compute $r(t)$ and include $t$ in $R$.
The algorithm then forgets a uniform random index in $R$ to maintain the space bound. 
Since we want to get a $(1+\delta)$-factor approximation, the algorithm must be able to detect
an LIS of length $\Omega(\delta n)$. Of the indices in $R$ (up to time $t$),
it is possible that around a $O(\delta)$-fraction of them are in the LIS. 
Suppose we reach a small stretch of indices not on the LIS. If this has size
even $\poly(\delta^{-1}\log n)$, it is likely that all LIS indices in $R$
are forgotten.

But how do we selectively remember the LIS indices without knowing the LIS in
advance? That is the challenge of the forgetting strategy.

%
%
%
%
%
%
%
All past polylogarithmic space algorithms \cite{GJKK07,EJ08} for LIS use combinatorial characterizations
of increasing sequences based on inversion counting \cite{EKK+00,DGLRRS99,PRR04,ACCL1}.
While this is a very powerful technique, it does not lead to accurate approximations
for the LIS, and (apparently) do not yield any generalizations to LCS.

The idea of remembering selected information about the sequence that becomes
sparser as one goes back in time was first used by \cite{GJKK07} for the inversion counting approach.
Our work seems to be the first to use this to directly mimic the dynamic program, though the
idea is quite natural and has almost certainly been considered before.
The main contribution here is to analyze this
algorithm, and determine the values of parameters that allow it to be both
space efficient and a good approximation.

This line of thinking can be exploited to deal with asymmetric streaming LCS. We construct
a simple reduction of LCS to finding the longest chain in a specific partial order. This
reduction has a streaming implementation, so the input
stream can be directly seen as just elements of this resulting partial order.
This reduction blows up the size of the input, and the size of the largest
chain can become extremely small. If each symbol occurs $k$ times in $x$ and $y$,
then the resulting partial order has $nk$ elements. Nonetheless, the longest chain
still has length at most $n$. We require very accurate estimates
for the length of the longest chain. This is where the power of the $(1+\delta)$-approximation
comes in. We can choose $\delta$ to be much smaller to account for the input blow up,
and still get a good approximation. Note that if we only had a $1.01$-approximation
for the longest chain problem, this reduction would not be useful.

Our $\otilde(\sqrt{n})$-space algorithm also works according to the basic principle of 
following a dynamic program, although it uses one different from the previous algorithms. 
This can be thought of as generalization of the $\otilde(\sqrt{n})$-space algorithm for LIS\cite{GJKK07}.
We maintain a $\otilde(\sqrt{n})$-space
deterministic sketch of the data structure maintained by the exact algorithm. 
By breaking the stream up into the right number of chunks, we can update this sketch using 
$\otilde(\sqrt{n})$-space.
\vspace{-5pt}
\subsection{Previous work} \label{sec:prev}
The study of LIS and LCS in the streaming setting was initiated by Liben-Nowell et al \cite{LNVZ05},
although their focus was mostly on exactly computing the LIS. Sun and Woodruff \cite{SW07} improved
upon these algorithms and lower bounds and also proved bounds for the approximate version.
Most relevant for our work, they prove that randomized protocols that compute a $(1+\eps)$-approximation
of the \emph{LIS length} essentially require $\Omega(\eps^{-1}\log n)$. Gopalan et al \cite{GJKK07}
provide the first polylogarithmic space algorithm that approximates the distance to monotonicity. This
was based on inversion counting ideas in \cite{PRR04,ACCL1}. Ergun and Jowhari \cite{EJ08} give a $2$-approximation using the basic
technique of inversion counting, but develop a different algorithm. G\'{a}l and Gopalan \cite{GG07} and
independently Ergun and Jowhari \cite{EJ08} proved an $\Omega(\sqrt{n})$ lower bound for
deterministic protocols that approximate that LIS length up to a multiplicative constant factor. 
For randomized protocols, the Sun and Woodruff bound of $\Omega(\log n)$ is the best known.
One of the major open problems is to get a $o(\sqrt{n})$ space randomized protocol
(or an $\Omega(\sqrt{n})$ lower bound)
for constant factor approximations for the LIS length. Note that our work does not imply
anything non-trivial for this problem. We are unaware of any lower bounds
for estimating the distance to monotonicity in the streaming setting.

A significant amount of work has been done in studying the LIS (or rather, the distance to monotonicity)
in the context of property testing \cite{EKK+00,DGLRRS99,FischerSurvey,PRR04,ACCL1}.
The property of monotonicity has been studied over a variety of domains, of which
the boolean hypercube and the set $[n]$ (which is the LIS setting) have usually been of special
interest~\cite{GGLRS00,DGLRRS99,ELNRR02,HK03,ACCL1,PRR04,BGJRW09}.

In previous work, the authors of this paper found a $(1+\delta)$-multiplicative approximation algorithm for the distance to monotonicity (in the random access model)
that runs in time $O(\poly\log(n))$ \cite{SS10}.  As the present result does for the streaming model,
that result also improved on the previous best factor 2 approximation for that model. 
Despite the superficial similarity between the statement of results, the models considered
in these two papers are quite different, and the algorithm we give here in the streaming model
is completely different from the complicated algorithm we gave in the sublinear time model.

The LCS and edit distance have an extemely long and rich history, especially in the applied domain. We point
the interesting reader out to \cite{Gus97, Nav01} for more details. Andoni et al \cite{AKO10} achieved a 
breakthrough by giving a near-linear time algorithm (in the random access model)
that gives polylogarithmic time approximations
for the edit distance. This followed a long line of results well documented in \cite{AKO10}.
They initiate the study of the \emph{asymmetric edit distance}, where one string is known
and we are only charged for accesses to the other string.
For the case of non-repetitive strings,
there has been a body of work on studying the Ulam
distance between permutations~\cite{AK07,AK08,AIK09,AN10}.

\vspace{-5pt}
\section{Paths in posets}

We begin by defining a streaming problem called the {\em Approximate Minimum-Defect Path} problem (AMDP).
We define it formally below, but intuitively, we look at the stream as a sequence of elements from some poset. Our aim is to estimate
the size of the complement of the longest chain, consistent with the stream ordering. This is more general
than LIS, and we will show how streaming algorithms for LIS and LCS can be obtained
from reductions to AMDP.

\subsection{Weighted $P$-sequences and the approximate minimum-defect path problem}

We use $P$ to denote a 
fixed set endowed with a partial order $\triangleleft$.
The partial order relation is given by an oracle which, given
$u,v \in P$ outputs $u \triangleleft v$ or $\neg(u \triangleleft v)$.
For a natural number $n$ we write $[n]$ for the set $\{1,2,\ldots,n\}$.

A sequence ${\bf \sigma}=(\sigma(1),\ldots,\sigma(n)) \in P$ 
is called
a {\em  $P$-sequence}. The number of terms ${\bf \sigma}$ is called
the length of ${\bf \sigma}$ and is denoted $|{\bf \sigma}|$; we normally use
$n$ to denote $|{\bf \sigma}|$.  A {\em weighted $P$-sequence} 
consists of a $P$-sequence ${\bf \sigma}$ together with
a sequence $(w(1),\ldots,w(n))$ of nonnegative integers; $w(i)$ is called the
{\em weight} of index $i$.  In all our final applications $w(i)$ will
always be $1$. Nonetheless, we solve this slightly more general weighted version.

We have the following additional definitions:
\vspace{5pt}
\begin{asparaitem}
\item For $t \in [n]$,
${\bf \sigma}_{\leq t}$ denotes the sequence $(\sigma_1,\ldots,\sigma_t)$.
Also for $J \subseteq [n]$, $J_{\leq t}$ denotes the set $J \cap \{1,\ldots,t\}$.
\item
For $J \subseteq [n]$, $w(J)=\sum_{j \in J} w(j)$.
\item
The digraph $D=D({\bf \sigma})$ associated to the $P$-sequence ${\bf \sigma}$ has vertex set 
$[n]$ (where $n=|{\bf \sigma}|$) and arc set
$\{i \rightarrow j: i<j \text{ and } \sigma(i) \triangleleft \sigma(j)\}$.
\item A path $\pi$ in $D({\bf \sigma})$ is called a {\em ${\bf \sigma}$-path}.
Such a path is a sequence $1 \leq \pi_1 < \ldots <\pi_k \leq n$ of indices
with $\pi_1 \longrightarrow \cdots \longrightarrow \pi_k$.
We say that $\pi$ ends at $\pi_k$. 
\item The {\em defect of path $\pi$}, $\defect(\pi)$ is defined to be
$w([n]-\pi)$.  
\item $\dmin({\bf \sigma},w)$ is defined to be the minimum of
$\defect(\pi)$ over all ${\bf \sigma}$-paths $\pi$.
\end{asparaitem}
\vspace{5pt}

We now define the {\em Approximate Minimum-defect path} problem (AMDP). The input
is a weighted $P$-sequence $({\bf \sigma},w)$, an approximation parameter $\delta \in (0,1]$,
and an error parameter $\gamma>0$. The output is a number $A$ such that:
%
%
$
\prob[A \in [\dmin({\bf \sigma},w),(1+\delta) \dmin({\bf \sigma},w)]] \geq 1-\gamma.
$
An algorithm for AMDP that has the further guarantee that
$A \geq \dmin({\bf \sigma},w)$ is said to be a {\em one-sided error algorithm}. 
\vspace{-5pt}
\subsection{Streaming algorithms and the main result}

In a one-pass streaming algorithm, the algorithm has one-way access to the input.
For the AMDP, the input consists of the parameters $\delta$ and $\gamma$
together with a sequence of $n$ pairs
$((\sigma(t),w(t)):t \in [n])$.  We think of the input as arriving in a sequence
of discrete time steps, where $\delta,\gamma$ arrive at time step 0 and
for $t \in [n]$, $(\sigma(t),w(t))$ arrives at time step $t$.

The main complexity parameter of interest is the auxiliary memory needed.
For simplicity, we assume that each memory cell can store any one of the following:
a single element of $P$, an index in $[n]$, or an arbitrary sum $w(J)$ of distinct weights.
%
%
Associated to a weighted $P$-sequence $({\bf \sigma},w)$ we define the parameter: 
$\rho=\rho(w)=\sum_i w_i.$ Typically one should think of the weights as bounded by a polynomial in $n$ and so
$\rho=n^{O(1)}$. The main technical theorem about AMDP is the following.
\begin{theorem}
\label{thm:main}
There is a randomized one-pass streaming algorithm for AMDP that
operates with one-sided error and uses space $O(\frac{\ln(n/\gamma)\ln(\rho)}{\delta})$.
\end{theorem}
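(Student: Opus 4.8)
The plan is to analyze the three-line algorithm of Section~\ref{sec:tech}, written for weighted $P$-sequences. The state is a set $R$ of indices; for each $i\in R$ we keep $\sigma(i)$, the prefix weight $W(i)=w(\{1,\dots,i\})$, and an estimate $r(i)$. Initialize $R=\{0\}$ with a virtual minimum element $\sigma(0)$ of weight $0$ that is never deleted. On input $(\sigma(t),w(t))$ at time $t$: set $R'=\{i\in R:i\to t\text{ in }D(\sigma)\}$ and $r(t)=\min_{i\in R'}\bigl(r(i)+W(t-1)-W(i)\bigr)$; add $t$ to $R$; then delete each $i\in R$ independently with probability $\alpha(i,t)$. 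Output $r(n+1)$, where $\sigma(n+1)$ is a formal top element of weight $0$. Writing $s(i,t)=W(t)-W(i-1)$ for the weighted span, the deletion rule is chosen so that $\alpha(i,t)=0$ whenever $s(i,t)\le\tau$ for a threshold $\tau=\Theta(\delta^{-1}\ln(n/\gamma))$, and $\alpha(i,t)=\Theta(1/s(i,t))$ above it; a product estimate then gives $\prob[\,i\in R\text{ when time }t\text{ is processed}\,]\approx\min\{1,\tau/s(i,t)\}$. When $w\equiv 1$ this is just ``$\alpha(i,t)\approx 1/(t-i)$, switched off for $t-i\le\tau$.''

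\emph{One-sided error} is immediate: unwinding the output, $r(n+1)=\sum_k\bigl(W(i_{k-1}-1)-W(i_k)\bigr)$ for some $0=i_r\to\cdots\to i_0=n+1$ in $D(\sigma)$ (each arc used because its tail lay in $R'$ at that step), which telescopes to $w\bigl([n]\setminus\{i_0,\dots,i_r\}\bigr)$, the defect of a genuine $\sigma$-path; hence $r(n+1)\ge\dmin(\sigma,w)$ with certainty.

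\emph{Accuracy} --- showing $\prob[\,r(n+1)\le(1+\delta)\dmin(\sigma,w)\,]\ge 1-\gamma$ --- is the heart of the proof. Fix an optimal $\sigma$-path $\pi^*=(p_0{=}0,p_1,\dots,p_m{=}n{+}1)$, so $w([n]\setminus\pi^*)=\dmin$. By transitivity of $\triangleleft$, if path elements $p_a<p_b$ are both in $R$ when time $p_b$ is processed then $p_a\in R'$, so $r(p_b)\le r(p_a)+W(p_b-1)-W(p_a)$. Following these inequalities through only the path elements actually present in $R$ when consulted, and telescoping as above, one gets the deterministic inequality
\[
r(n+1)\ \le\ \dmin(\sigma,w)\ +\ w(S),
\]
where $S\subseteq\{p_1,\dots,p_{m-1}\}$ is the (random) set of path elements ``skipped over'' --- absent from $R$ at the moment their slot on $\pi^*$ would be consulted. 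So it remains to bound $\prob[\,w(S)>\delta\dmin\,]$ by $\gamma$. The no-delete window of weighted width $\tau$ keeps any path element that lies within weighted distance $\tau$ of its $\pi^*$-successor out of $S$, confining $S$ to the ``sparse'' stretches of $\pi^*$, whose off-path weight is charged to $\dmin$; the $1/s(i,t)$ decay then makes a path element that still has much $\pi^*$-weight ahead of it likely to survive until the next path element is inserted. Converting this into the claimed tail bound is the delicate point: the relevant survival event for a path element is not ``survive to the end of the stream'' but ``survive until superseded in $R$ by the next remembered path element,'' so the deletion events of different indices are coupled through $\pi^*$ and a term-by-term expectation bound is far too weak; one must amortize the cost along $\pi^*$ and then apply a concentration bound (using the independence of deletions across indices and the fact that the very presence of a length-$>\tau$ gap already forces $\dmin=\Omega(\tau)$, giving the tail room), and it is exactly this step that pins down the form of $\alpha$ and the value $\tau=\Theta(\delta^{-1}\ln(n/\gamma))$. \textbf{I expect this accuracy argument to be the main obstacle.}

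\emph{Space.} The memory in use at time $t$ is $\Theta(|R_t|)$, and $|R_t|=\sum_{i\le t}\mathbf{1}[i\in R_t]$ is a sum of independent indicators with $\prob[i\in R_t]\approx\min\{1,\tau/s(i,t)\}$. Grouping indices by the value $s(i,t)\in\{1,\dots,\rho\}$ and summing the harmonic-type series gives $\EX|R_t|=O(\tau\ln\rho)$; a Chernoff bound then gives $|R_t|=O(\tau\ln\rho+\ln(n/\gamma))$ except with probability $\gamma/n$, and a union bound over $t\in[n]$ makes this hold for all $t$ with probability $\ge 1-\gamma$. With $\tau=\Theta(\delta^{-1}\ln(n/\gamma))$ this is $O(\delta^{-1}\ln(n/\gamma)\ln\rho)$, as required. (Each update is an $O(|R|)$ scan, so the algorithm is one-pass with small update time.)
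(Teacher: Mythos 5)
Your setup is correct: the algorithm you describe matches the paper's algorithm $\Gamma$ (your $\alpha(i,t)$ corresponds to the paper's $1-p(i,t)$, and the threshold $\tau$ you identify is right), and your one-sided-error argument and your reduction of the accuracy claim to bounding $\prob[w(S)>\delta\dmin]$ are both sound. But you stop exactly at the step you flag as ``the main obstacle,'' and the route you sketch toward it (amortize the survival cost along $\pi^*$, then concentrate) is not how the paper closes the gap and would be genuinely painful to make rigorous, precisely because of the coupling you worry about.

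The paper's resolution is simpler and is the one idea missing from your proposal: it decouples the probabilistic event from the path entirely by constructing a \emph{deterministic} ``catch set'' $B\subseteq[n]$ in advance and then showing $\prob[U\not\subseteq B]\le\gamma/2$ by a plain union bound. Concretely, call an interval $[i,j]$ \emph{dangerous} if $w(C\cap[i,j])\le\frac{\delta}{1+\delta}w([i,j])$, and let $B$ be a maximal left-to-right greedy union of disjoint dangerous intervals. Two deterministic facts (the paper's Claims \ref{off chain is dangerous} and \ref{B weight}) give $\bar C\subseteq B$ and $w(B)\le(1+\delta)w(\bar C)$. For the probabilistic part, fix any $i\notin B$ and any $t\ge i$: since $[i,t]$ is \emph{not} dangerous, $w(C\cap[i,t])\ge\frac{\delta}{1+\delta}w([i,t])$, which forces $\sum_{j\in C\cap[i,t]}q(j,t)\ge\ln(4t^3/\gamma)$, hence $\prob[i\in U^t]=\prod_{j\in C\cap[i,t]}(1-q(j,t))\le\gamma/4t^3$. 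Summing over all pairs $(i,t)$ gives $\prob[U\subseteq B]\ge1-\gamma/2$, and then $w(U\cup\bar C)\le w(B)\le(1+\delta)\dmin$ on that event. No amortization, no coupling between time steps, no Chernoff on the path: just a union bound over the at most $n^2$ pairs $(i,t)$, made possible by choosing $B$ deterministically. That is the missing lemma, and it is exactly what pins down the $\ln(t^3/\gamma)$ factor inside $q(i,t)$.

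Two smaller points. First, your space bound is stated as a high-probability event, but the theorem asserts a worst-case space guarantee; the paper handles this by switching to a trivial fallback (output $\sum_i w(i)$) the moment $|R|$ crosses the threshold, which preserves one-sided error and costs only another $\gamma/2$ in the failure probability. Second, your bound $\EX[|R_t|]=O(\tau\ln\rho)$ via grouping by $s(i,t)$ is fine, but note the paper proves $\sum_{i=1}^t w(i)/(W(t)-W(i-1))\le\ln(e\rho)$ by a clean backwards induction rather than a dyadic grouping; both work, and this part of your argument is essentially correct.
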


In particular, if $\rho=n^{O(1)}$ and $\gamma=1/n^{O(1)}$ 
then the space is $O(\frac{(\ln(n))^2}{\delta})$.
\vspace{-5pt}
\section{The algorithm}

Our streaming algorithm can be viewed as a modification of a 
standard dynamic programming algorithm for exact computation of 
$\dmin({\bf \sigma},w)$. 
We first review this dynamic program. 
\vspace{-5pt}
\subsection{Exact computation of $\dmin({\bf \sigma},w)$}

It will be convenient to extend the $P$-sequence by an element
$\sigma(n+1)$ that is greater than all other elements of $P$.
Thus 
all arcs $j \longrightarrow n+1$ for $j \in [n]$ are present.
Set $w(n+1)=0$.
We define
sequences $s(0),\ldots,s(n+1)$ and $W(0),\ldots,W(n+1)$ as follows. We initialize
$s(0) = 0$ and $W(0) = 0$. For $t \in [n+1]$:
\begin{eqnarray*}
W(t) & = & W(t-1)+w(t)\\
s(t) & = & \min (s(i)+W(t-1)-W(i): i < t \\
& & \ \ \ \ \ \ \ \ \ \ \ \ \ \ \ \ \mbox{ such that } \sigma_i \longrightarrow \sigma_t)).
\end{eqnarray*}
Thus $W(t)=w([t])$. 
It is easy to prove by induction that $s(t)$ is equal to
the minimum of $W(t)-w(\pi)$ over all paths $\pi$ whose maximum
element is $\sigma(t)$.  In particular, $\dmin({\bf \sigma},w)=s(n+1)$.

The above recurrence can be implemented by a one-pass streaming algorithm
that uses linear space (to store the values of $s(t)$ and $W(t)$).

\subsection{The polylog space streaming algorithm}

We denote our streaming algorithm by $\Gamma=\Gamma({\bf \sigma},w,\delta,\gamma)$.
Our approximation algorithm is a natural
variant of the exact algorithm.
At  step $t$ the algorithm computes an approximation $r(t)$ to $s(t)$.  The
difference is that rather than storing $r(i)$ and $W(i)$
for all $i$, we store them only for an evolving subset
$R$ of indices, called the {\em active set}
of indices.  The amount
of space used by the algorithm is proportional
to the maximum size of $R$.

We first define the probabilities $p(i,t)$. Similar quantities were defined in \cite{GJKK07}.
\begin{eqnarray*}
& & q(i,t) \\
& = & \min\left\{1,\frac{1+\delta}{\delta}\ln(4t^3/\gamma)\frac{w(i)}{W(t)-W(i-1)}\right\}\\
& & p(i,i) = 1 \ \ \ \ p(i,t) = \frac{q(i,t)}{q(i,t-1)} \text{ for $t>i$},
\end{eqnarray*}

Note that in the typical case that
$\delta = \theta(1)$ and $\gamma = \log(n)^{-\Theta(1)}$, we have $q(i,t)$ is $\Theta(\ln(n)/(t-i))$.

We initialize $R=\{0\}$, $r(0)=0$ and $W(0)=0$. The following update
is performed for each time step $t \in [n+1]$. The final output is just $r(n+1)$.

\medskip

\begin{asparaenum}
\item $W(t)=W(t-1)+w(t)$.
\item $r(t)=\min (r(i)+W(t-1)-W(i): i \in R 
\mbox{ such that } \sigma_i \longrightarrow \sigma_t)$.
\item The index $t$ is inserted in $R$. Each element $i \in R$ is (independently) discarded with probability $1-p(i,t)$.
\end{asparaenum}

\medskip

\begin{theorem}
\label{thm:analysis}
On input $({\bf \sigma},w,\delta,\gamma)$, the algorithm $\Gamma$ satisfies: 
\medskip
\begin{asparaitem}
\item $r(n+1) \geq \dmin({\bf \sigma},w)$.
\item $\prob[r(n+1) > (1+\delta)\dmin({\bf \sigma},w)] \leq \gamma/2$.
\item The probability that $|R|$ ever exceeds $ \frac{2e^2}{\delta}\ln(2\rho)\ln(4n^3/\gamma)$ is at most $\gamma/2$.
\end{asparaitem}
\end{theorem}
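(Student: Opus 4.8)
I would prove the three items separately, using throughout the extended sequence of the exact dynamic program together with the identities $r(0)=0$, $W(t)=w([t])$, and $s(n+1)=\dmin(\sigma,w)$.

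\emph{The lower bound and the space bound are the easy parts.} For $r(n+1)\ge\dmin(\sigma,w)$ I would show by induction on $t$ that $r(t)=W(t)-w(\pi)$ for an actual $\sigma$-path $\pi$ ending at $t$: in the update $r(t)=r(i)+W(t-1)-W(i)$ one appends $t$ to the path realizing $r(i)$, and $W(t-1)=W(t)-w(t)$ makes the bookkeeping close; since every such path competes in the definition of $s(t)$ we get $r(t)\ge s(t)$, hence $r(n+1)\ge s(n+1)=\dmin(\sigma,w)$. (The update at time $n+1$ is non-vacuous because $n$ is never discarded at step $n$ and $\sigma_n\longrightarrow\sigma_{n+1}$.) For the space bound, the key point is that, with all coin flips fixed in advance, ``$i\in R$ at step $t$'' is exactly the event that $i$ survived the discards at steps $i+1,\dots,t$; these discards for index $i$ use coins disjoint from those for any other index, so for distinct $i$ these events are mutually independent, and by telescoping $\prob[i\in R\text{ at }t]=\prod_{s=i+1}^{t}p(i,s)=q(i,t)/q(i,i)=q(i,t)$ (using $q(i,i)=1$, which is exactly why the denominator is $W(t)-W(i-1)$). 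Thus $|R|$ at step $t$ is a sum of independent indicators whose mean is $\sum_{i\le t}q(i,t)$, which I would bound by $O(\delta^{-1}\ln\rho\,\ln(4t^{3}/\gamma))$ via the harmonic-type estimate $\sum_{i\le t}\tfrac{w(i)}{W(t)-W(i-1)}\le 1+\ln(W(t)/w(t))$. A Chernoff bound (legitimate by independence) and a union bound over the $n+1$ steps then give the stated $\tfrac{2e^{2}}{\delta}\ln(2\rho)\ln(4n^{3}/\gamma)$ bound with failure probability $\le\gamma/2$; the constant $e^{2}$ and the factor $4$ inside the logarithm are chosen so the Chernoff exponent exceeds $\ln(2n/\gamma)$.

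\emph{The approximation bound is where the work is.} Fix an optimal path, written in path-positions as $0=\pi^{*}_{0}<\pi^{*}_{1}<\cdots<\pi^{*}_{k}=n+1$, set $d_{j}:=W(\pi^{*}_{j})-w(\{\pi^{*}_{1},\dots,\pi^{*}_{j}\})$ (so $d_{0}=0$, $d_{k}=\dmin(\sigma,w)=:d$), and for each $j$ let $a(j)$ be the largest path-position $<j$ whose element $\pi^{*}_{a(j)}$ is still active at step $\pi^{*}_{j}$. Since $\triangleleft$ is transitive, $\pi^{*}_{a(j)}\longrightarrow\pi^{*}_{j}$ is an arc, so $\pi^{*}_{a(j)}$ competes in the update for $r(\pi^{*}_{j})$; writing $L(j):=r(\pi^{*}_{j})-d_{j}$, a short computation gives $L(j)\le L(a(j))+w(\{\pi^{*}_{m}:a(j)<m<j\})$. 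Unrolling this from $j=k$ along the ``anchor chain'' $k=j_{0}>j_{1}=a(j_{0})>\cdots>j_{M}=0$, and using that the index ranges $\{j_{\ell+1}+1,\dots,j_{\ell}-1\}$ are pairwise disjoint and $L(0)=0$, yields
\[ r(n+1)-d=L(k)\le\sum_{\ell}w(\{\pi^{*}_{m}:j_{\ell+1}<m<j_{\ell}\}): \]
the error is at most the total weight of the path elements skipped by the chain, grouped by the ``chain gaps''. Call a pair $(a',b')$ of path-positions \emph{dangerous} if $w(\{\pi^{*}_{m}:a'<m<b'\})>\delta(d_{b'}-d_{a'})$. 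If no chain gap is dangerous then $r(n+1)-d\le\delta\sum_{\ell}(d_{j_{\ell}}-d_{j_{\ell+1}})=\delta d$, i.e.\ $r(n+1)\le(1+\delta)d$; so it remains to bound $\prob[\text{some chain gap is dangerous}]$. A pair $(a',b')$ can be a chain gap only if $\pi^{*}_{m}$ is inactive at step $\pi^{*}_{b'}$ for every $a'<m<b'$, and these events are independent (disjoint coins), so, using $1-q(i,t)\le\exp(-\tfrac{1+\delta}{\delta}\ln(4t^{3}/\gamma)\tfrac{w(i)}{W(t)-W(i-1)})$ (which also absorbs the cap at $1$ in $q$),
\[ \prob[(a',b')\text{ is a chain gap}]\le\prod_{a'<m<b'}(1-q(\pi^{*}_{m},\pi^{*}_{b'}))\le\exp\Bigl(-\tfrac{1+\delta}{\delta}\ln\tfrac{4(\pi^{*}_{b'})^{3}}{\gamma}\sum_{a'<m<b'}\tfrac{w(\pi^{*}_{m})}{W(\pi^{*}_{b'})-W(\pi^{*}_{m}-1)}\Bigr). \]
I would then show that for dangerous $(a',b')$ the bracketed sum is at least $\tfrac{\delta}{1+\delta}$, so that $\prob[(a',b')\text{ is a chain gap}]\le\gamma/(4(\pi^{*}_{b'})^{3})$; in the unit-weight case this comes down to choosing the worst placement of the skipped indices and the elementary inequality $\sum 1/j\ge\ln(1+\delta)\ge\tfrac{\delta}{1+\delta}$, where the calibration $(1+\delta)\ln(1+\delta)\ge\delta$ for $\delta\in(0,1]$ is exactly what the factor $\tfrac{1+\delta}{\delta}$ in $q$ provides. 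Summing over pairs: for each $b'$ there are at most $b'\le\pi^{*}_{b'}$ choices of $a'$, and the values $\pi^{*}_{b'}$ are distinct, so $\prob[\text{some dangerous chain gap}]\le\tfrac{\gamma}{4}\sum_{b'}(\pi^{*}_{b'})^{-2}\le\tfrac{\gamma}{4}\zeta(2)<\gamma/2$, which finishes the second and (combined with the first) the statement.

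\emph{Main obstacle.} The deterministic implication ``dangerous $\Rightarrow$ the normalized-weight sum is $\ge\tfrac{\delta}{1+\delta}$'' is the crux, and it is precisely here that the exact form of $q(i,t)$ gets consumed: the $\tfrac{1+\delta}{\delta}$ factor for the arithmetic, the ``$-W(i-1)$'' forcing $q(i,i)=1$ (so the anchor-chain product telescopes and short gaps simply cannot be chain gaps), and the cube in $\ln(4t^{3}/\gamma)$ making $\sum_{b'}(\pi^{*}_{b'})^{-2}$ summable against the union bound. I expect the messiest point to be carrying this estimate, and the harmonic bound in the space argument, through the case of general nonnegative-integer weights rather than just the unit weights used in the applications; the unit-weight case is comparatively transparent, and everything else is routine once the independence of the retention events is noticed.
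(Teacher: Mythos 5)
Your proposal is correct in substance and rests on the same three pillars as the paper's proof: the one-sided error by induction, the calibration of $q(i,t)$ so that a $\delta/(1+\delta)$-dense stretch of the optimal chain survives except with probability $\gamma/4t^3$, and the expectation-plus-tail-bound argument (with the harmonic sum $\sum_i w(i)/(W(t)-W(i-1)) \leq \ln(eW(t)/w(t))$) for the space bound. Where you genuinely diverge is in the bookkeeping for the approximation bound. The paper first proves a purely deterministic lemma, $r(n+1)\le w(\bar C\cup U)$ where $U$ is the set of ``unsafe'' chain indices (those $i$ for which some $t$ has all of $C\cap[i,t]$ forgotten by time $t$), and then constructs a \emph{fixed} greedy decomposition of $[n]$ into maximal dangerous intervals $B$, showing $w(B)\le(1+\delta)w(\bar C)$ and $\prob[U\subseteq B]\ge 1-\gamma/2$; the union bound runs over pairs $(t,i)$ with $i\notin B$. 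You instead unroll along the random anchor chain and union-bound directly over \emph{dangerous pairs of chain positions}, using the telescoping $\sum_\ell(d_{j_\ell}-d_{j_{\ell+1}})=d$ in place of the $w(B)\le(1+\delta)w(\bar C)$ claim. Your route avoids the interval-decomposition construction entirely, which is a nice simplification; the price is exactly the spot you flag as the crux: because your ``chain gap'' event only forces the strictly intermediate chain elements to be forgotten, your normalized sum is $S/(S+N+w(\pi^*_{b'}))$ rather than the paper's exact $w(C\cap[i,t])/w(C\cap[i,t])\cdot\frac{\delta}{1+\delta}$, so the inequality ``dangerous $\Rightarrow$ sum $\ge\delta/(1+\delta)$'' does not hold verbatim and must be rescued by the slack between $\ln(1+\delta)$ and $\delta/(1+\delta)$ (or by slightly retuning the dangerous threshold). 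The paper sidesteps this by defining unsafeness over the closed set $C\cap[i,t]$ and measuring density against the raw interval $[i,t]$, which makes the key inequality an identity. Both arguments work; the paper's is tighter at that one step, yours is leaner combinatorially.
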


The above theorem does not exactly give what was promised
in Theorem \ref{thm:main}. For the algorithm $\Gamma$, there is a small probability that
the set $R$ exceeds the desired space bound while Theorem \ref{thm:main} promises an upper
bound on the space used.  To achieve the guarantee of Theorem \ref{thm:main} we modify
$\Gamma$ to an algorithm $\Gamma'$ which checks whether $R$ ever exceeds the desired space
bound, and if so, switches to a trivial algorithm which only computes the sum of all weights
and outputs that.  This guarantees that we stay within the space bound, and since the probability
of switching to the trivial algorithm is at most $\gamma/2$, the probability
that the output of $\Gamma'$ exceeds $(1+\delta)\dmin({\bf \sigma},w)$ is at most $\gamma$.

We now prove Theorem \ref{thm:analysis}. The first assertion is a direct consequence of the
following proposition whose easy proof (by induction on $t$) is omitted:
\begin{proposition} 
\label{one-sided error}
For all $j \leq n+1$ we have $r(j) \geq s(j)$ and thus
$r(n+1) \geq \dmin({\bf \sigma},w)$.
\end{proposition}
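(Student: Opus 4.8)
The plan is a straightforward induction on $j$, using the fact that the prefix sums are maintained exactly. First I would observe that in both the exact recurrence and the algorithm $\Gamma$, the quantity $W(j)$ equals $w([j])$ for every $j$: in $\Gamma$ it is updated by the running sum $W(t)=W(t-1)+w(t)$ in step 1, and although $W(i)$ is physically retained only for $i\in R$, whenever step 2 references $W(i)$ we have $i\in R$, so the value $W(t-1)-W(i)$ appearing there is exactly the true $w(\{i+1,\dots,t-1\})$. Thus the only source of error in $r(t)$ relative to $s(t)$ is (a) that the minimum is taken over $R$ rather than over all indices $<t$, and (b) that $r(i)$ is used in place of $s(i)$.

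For the induction, the base case $j=0$ is immediate since $r(0)=0=s(0)$. For the inductive step, fix $t\in[n+1]$ and assume $r(i)\ge s(i)$ for all $i<t$. At the moment step 2 is executed the active set satisfies $R\subseteq\{0,1,\dots,t-1\}$ (the index $t$ is only inserted in step 3). Hence
\[
r(t)=\min\bigl(r(i)+W(t-1)-W(i): i\in R,\ \sigma_i\longrightarrow\sigma_t\bigr),
\]
and every term in this minimum is, by the inductive hypothesis, at least $s(i)+W(t-1)-W(i)$; moreover each such $i$ is an admissible index in the definition of $s(t)$ (it lies in $\{0,\dots,t-1\}$ and satisfies $\sigma_i\longrightarrow\sigma_t$), so $s(i)+W(t-1)-W(i)\ge s(t)$. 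Therefore every term over which $r(t)$ is minimized is $\ge s(t)$, giving $r(t)\ge s(t)$. If the index set in step 2 happens to be empty, then $r(t)=+\infty\ge s(t)$ (note $s(t)$ itself is always finite because $\sigma_0$ is below everything, so $0$ is always admissible in the definition of $s(t)$), so the conclusion holds in that case as well.

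Taking $j=n+1$ and invoking the identity $\dmin({\bf\sigma},w)=s(n+1)$ recorded in the description of the exact dynamic program yields $r(n+1)\ge s(n+1)=\dmin({\bf\sigma},w)$, which is the second assertion.

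There is essentially no genuine obstacle here; the one point that needs a moment's care is the bookkeeping in the previous paragraph — namely that $R$ at step 2 of round $t$ is a subset of the full index range $\{0,\dots,t-1\}$ and that the stored differences $W(t-1)-W(i)$ are exact — after which monotonicity of $\min$ over a sub-collection of pointwise-larger terms does all the work.
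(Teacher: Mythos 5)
Your induction on $t$ is correct and is exactly the argument the paper has in mind: it states that the proposition follows "by induction on $t$" and omits the easy details, which you have supplied (including the harmless edge case where the minimum in step 2 is over an empty set). Nothing further is needed.
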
 
The second part will be proved in the two subsection. The final assertion
of Theorem \ref{thm:analysis} showing the space bound is deferred to Appendix~\ref{sec:space}.
\vspace{-5pt}
\subsection{Quality of estimate bound of Theorem \ref{thm:analysis}}
We prove the second assertion of Theorem \ref{thm:analysis}, which is the main
technical part of the proof. Let $R_t$ denote the
set $R$ after processing $\sigma(t),w(t)$.
Observe that the definition of $p(i,j)$ implies:
\begin{proposition}
\label{forget prob}
For each $i \leq t \leq n$, $\prob[i \in R_t]=\prod_{j \in [i,t]}p(i,j)=q(i,t)$.
\end{proposition}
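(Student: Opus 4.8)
\textbf{Proof proposal for Proposition \ref{forget prob}.}

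The plan is to prove this by a straightforward induction on $t$, tracking the event that index $i$ survives in the active set. First, fix $i$ and condition on the history up through time $i$; since $t$ is always inserted into $R$ at step $i$, we have $i \in R_i$ with certainty, which matches the claim because $p(i,i)=1$ and hence $\prod_{j\in[i,i]}p(i,j)=p(i,i)=1=q(i,i)$ (the last equality holding since $q(i,i)=\min\{1,\frac{1+\delta}{\delta}\ln(4i^3/\gamma)\cdot w(i)/(W(i)-W(i-1))\}=\min\{1,\frac{1+\delta}{\delta}\ln(4i^3/\gamma)\}=1$ for all relevant ranges, as $\frac{1+\delta}{\delta}\ln(4i^3/\gamma)\ge 1$). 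For the inductive step, observe that in step 3 at time $t+1$, each currently-active index is discarded independently with probability $1-p(i,t+1)$, and this coin flip is independent of all earlier history. Therefore
\[
\prob[i\in R_{t+1}] \;=\; \prob[i\in R_t]\cdot p(i,t+1)
\;=\;\Big(\prod_{j\in[i,t]}p(i,j)\Big)\,p(i,t+1)
\;=\;\prod_{j\in[i,t+1]}p(i,j),
\]
using the induction hypothesis in the middle equality. This establishes the first equality in the proposition.

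It remains to check the second equality, $\prod_{j\in[i,t]}p(i,j)=q(i,t)$. This is purely algebraic and follows by telescoping: by definition $p(i,i)=1=q(i,i)$ and for $j>i$, $p(i,j)=q(i,j)/q(i,j-1)$, so
\[
\prod_{j\in[i,t]}p(i,j)\;=\;p(i,i)\prod_{j=i+1}^{t}\frac{q(i,j)}{q(i,j-1)}\;=\;\frac{q(i,t)}{q(i,i)}\;=\;q(i,t).
\]
One minor point worth a sentence: the definition $p(i,t)=q(i,t)/q(i,t-1)$ only makes sense, and lies in $[0,1]$, when $q(i,t-1)>0$ and $q(i,t)\le q(i,t-1)$; the former holds since each $q(i,\cdot)$ is a positive quantity, and the latter holds because $W(t)-W(i-1)$ is nondecreasing in $t$ (weights are nonnegative), so the argument of the $\min$ defining $q(i,t)$ is nonincreasing in $t$, hence $q$ itself is nonincreasing in $t$.

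I do not anticipate a genuine obstacle here; the only thing requiring any care is the independence claim in the inductive step — namely that the discard coin for index $i$ at time $t+1$ is independent of the event $\{i\in R_t\}$ — which is immediate from the algorithm's description since step 3 uses fresh independent randomness at each time step. This is presumably why the authors call the proof "easy" and relegate it (or, here, state it as an observation following from the definition of $p(i,j)$). Everything else is the telescoping identity above.
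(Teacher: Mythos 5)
Your proof is correct and is essentially the argument the paper has in mind: the paper states this proposition as an observation ("the definition of $p(i,j)$ implies") with no written proof, and your induction-plus-telescoping fills in exactly the intended details. The one substantive thing you add is verifying that $p(i,t)\in[0,1]$, via monotonicity of $W(\cdot)$, and that $q(i,i)=1$; both checks are right (with the understood convention $w(i)\geq 1$, as in all applications in the paper — if $w(i)=0$ the quantities become degenerate, but that case is outside the paper's scope).
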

We need some additional definitions.
\medskip
\begin{asparaitem}
\item For $I \subseteq [n+1]$, we denote $[n+1]-I$ by $\bar{I}$.
\item
Let $C$ be the index set of some fixed chain having minimum $\defect$, so that
the minimum defect is equal to $w(\bar{C})$.
We assume without loss of generality that $n+1 \in C$.
\item We write $R^t$ for the subset $R$ at the end of step $t$.
Note that $R^t \subseteq [t]$.  We define $F^t=[t]-R^t$.
An index $i \in R^t$ is said to be {\em remembered at time $t$} and
$i \in F^t$ is said to be {\em forgotten by time $t$}.
\item
Index $i \in C$ 
is said to be {\em unsafe at time $t$} if 
every index in $C \cap [i,t] \subseteq F^t$,
i.e., every index of $C \cap [i,t]$ is forgotten
by time $t$.
We write $U^t$ for the
set of indices that are unsafe at time $t$.
\item
An index $i \in C$ is said to be {\em unsafe} if it is unsafe for some time $t>i$
and is {\em safe} otherwise.  We denote the set of unsafe indices by $U$.
%
On any execution, the set $U$ is determined by the
sequence $R^1,\ldots,R^n$.
\end{asparaitem}
\medskip

\begin{lemma}
\label{r(n+1) UB}
On any execution of the algorithm, $r(n+1) \leq w(\bar{C} \cup U)$.
\end{lemma}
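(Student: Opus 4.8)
The plan is to exhibit an explicit $\sigma$-path $\pi$ whose defect is at most $w(\bar C \cup U)$; since $r(n+1) = s(n+1) = \dmin$ is the minimum defect over all paths, and more to the point since the algorithm's value $r(n+1)$ is computed as a minimum over paths that survive in the active set, it suffices to produce one path that the algorithm actually ``sees'' and whose complement has the claimed weight. The natural candidate is $\pi = C \setminus U$, the set of \emph{safe} indices of the minimum-defect chain $C$. This is a sub-sequence of $C$, hence still a chain in $D(\sigma)$ (a subset of a path is a path), and it ends at $n+1$ since $n+1 \in C$ and $n+1$ is trivially safe (there is no time $t > n+1$). Its complement is $[n+1] \setminus (C \setminus U) = \bar C \cup U$, so $w([n+1] \setminus \pi) = w(\bar C \cup U)$, giving the right bound — \emph{provided} I can show the algorithm's recursion for $r$ never overestimates the cost of traversing this particular path.

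The key step, then, is to prove by induction on the safe indices $i_1 < i_2 < \dots < i_k = n+1$ of $C \setminus U$ that when the algorithm processes time $t = i_\ell$, the index $i_{\ell-1}$ is still in $R$ (i.e.\ $i_{\ell-1} \in R^{t-1}$, so it is available in step 2 of the update), and consequently $r(i_\ell) \le r(i_{\ell-1}) + W(i_\ell - 1) - W(i_{\ell-1})$. Telescoping this chain of inequalities from $i_1$ down to $i_k = n+1$, together with $r(i_1) \le W(i_1 - 1)$ (which holds because $i=0$ is always available, or because $i_1$'s own prefix cost is at most that much), yields $r(n+1) \le W(n) - \sum_\ell \big(W(i_\ell) - W(i_\ell-1) - \text{(gaps already counted)}\big)$, which after bookkeeping is exactly $w(\bar C \cup U)$. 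The one fact that makes the induction go through is the definition of ``safe'': if $i_{\ell-1} \in C$ is safe, then by definition it is \emph{not} unsafe at time $t = i_\ell > i_{\ell-1}$, which means $C \cap [i_{\ell-1}, i_\ell]$ is not entirely forgotten by time $i_\ell$; but the only elements of $C$ strictly between $i_{\ell-1}$ and $i_\ell$ are themselves unsafe (they lie in $U$, since $i_{\ell-1}, i_\ell$ were consecutive in $C \setminus U$), so... this is where I need to be a little careful.

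The main obstacle is precisely that last point: safety of $i_{\ell-1}$ only tells me \emph{some} element of $C \cap [i_{\ell-1}, i_\ell]$ is remembered at time $i_\ell$, not that $i_{\ell-1}$ itself is. I expect the resolution is to not insist on hopping between consecutive safe indices directly, but rather to argue more cleverly: walk along $C$ and whenever the current index is about to be forgotten, ``hand off'' to the next still-remembered index of $C$ before it — using that an index of $C$ being forgotten at time $t$ while its successor-in-$C$ survives is exactly the non-unsafe condition — and show that the total weight lost in all such hand-offs is absorbed into $w(U)$ because every index we ever skip past (fail to include in our traversal) is unsafe. So the actual path realized by the algorithm's recursion may be a further subsequence of $C \setminus U$, but its complement is still contained in $\bar C \cup U$, which is all we need. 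I would set up the induction hypothesis as: ``at the end of step $t$, there exists a safe index $j \in C \cap R^t$ with $r(j) + (W(t) - W(j)) \le w\big((\bar C \cup U)_{\le t}\big) + (\text{weight of } C\cap(j,t])$'' — or some such invariant tracking a ``current frontier'' index — and verify it is maintained across each of the three update steps, with the forgetting step being the place where the safe/unsafe dichotomy is invoked.
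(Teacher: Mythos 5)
Your high-level plan is the same as the paper's: bound $r(n+1)$ by $w(\bar C\cup U)$ by chaining along the chain $C$ and charging all on-chain indices that get ``skipped'' to the unsafe set $U$. You also correctly identify the real obstacle with the naive version ($\pi=C\setminus U$): safety of $i_{\ell-1}$ guarantees that \emph{some} element of $C\cap[i_{\ell-1},i_\ell]$ is still in $R$ at time $i_\ell-1$, not that $i_{\ell-1}$ itself is. Good.

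The fix you sketch, however, has a genuine flaw. Your invariant posits the existence of a \emph{safe} index $j\in C\cap R^t$, and ``safe'' is a global, time-independent property (never unsafe at any time, including future times). This can fail: consider a run in which, at some time $t$, the on-chain indices remembered in $R^t$ are exactly $a<b$; if at a later time $t'>t$ both $a$ and $b$ (and everything of $C$ between them and $t'$) get discarded, then $a$ and $b$ are both unsafe at time $t'$, hence both lie in $U$. So at time $t$ there was \emph{no} globally safe on-chain index in $R^t$, and your invariant is vacuous there. What actually exists, and is what the paper uses, is the max index $j$ of $C_{\le t-1}$ that is \emph{not unsafe at time $t-1$} — a time-local notion. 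This $j$ is automatically the max of $C\cap R^{t-1}$ (if $i\in C$ with $j<i\le t-1$ were remembered, $i$ would not be unsafe at $t-1$, contradicting maximality; conversely, $j$ not unsafe at $t-1$ forces $j$ itself to be remembered, since everything in $C\cap(j,t-1]$ is forgotten). Crucially, all of $C\cap(j,t-1]$ is then unsafe at time $t-1$, hence already in $U$ — that is the charging step. But such a $j$ need not be globally safe; it may become unsafe later, and that is fine.

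There is also a structural mismatch: you try to maintain a forward-propagated invariant about a frontier $j\in R^t$, but when $j$ is evicted from $R$ you have no way to transfer the bound to the new (smaller) frontier $j'$ — the invariant at time $t$ says nothing about $r(j')$. The paper avoids this by inducting directly on the chain indices: the claim is simply that for every $t\in C$, $r(t)\le w(\bar C_{\le t-1}\cup U^{t-1})$, proved by induction on $t$. To go from $j$ (the max remembered chain index at time $t-1$) to $t$, one uses the algorithm's step $2$ (legitimate because $j\in R^{t-1}$ and $\sigma(j)\to\sigma(t)$) together with the inductive bound on $r(j)$, and charges $C\cap(j,t-1]$ to $U^{t-1}$. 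No frontier needs to be carried forward, and no global-safety requirement arises. So: right approach and right obstacle, but the repair as written would not go through; replace ``safe $j\in C\cap R^t$'' with ``$j=\max(C\cap R^{t-1})$'' and restructure as a direct induction over $t\in C$.
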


\begin{proof}
We prove by induction on $t$ that if $t \in C$
then $r(t) \leq w(\bar{C}_{\leq t-1} \cup U^{t-1})$.
Assume $t \geq 1$ and that the result holds for $j<t$.
We consider two cases.

{\bf Case i.}
$U^{t-1} = C_{\leq t-1}$.  Then $w(\bar{C}_{\leq t-1} \cup U^{t-1})=W(t-1)$.
By definition $r(t) \leq r(0)+W(t-1)-W(0)=W(t-1)$, as required.

{\bf Case ii.}
$U^{t-1} \neq C_{\leq t-1}$. Let $j$ be the maximum index in $C_{\leq t-1}-U^{t-1}$.
Since $j,t \in C$ we must have $\sigma(j) \longrightarrow \sigma(t)$.
Therefore by the definition of $r(t)$ we have:
$r(t) \leq r(j)+W(t-1)-W(j)$.
By the induction hypothesis we have $r(j) \leq w(\bar{C}_{\leq j-1} \cup U^{j-1})$.
Since $j$ is the
largest element of $C_{\leq t-1}-U^{t-1}$ we have:
$\bar{C}_{\leq t-1}\cup U^{t-1}=\bar{C}_{\leq j-1} \cup U^{j-1} \cup [j+1,t-1]$, and so:
\begin{eqnarray*}
r(t) & \leq & r(j)+W(t-1)-W(j) \\
& \leq & w(\bar{C}_{\leq j-1} \cup U^{j-1} \cup [j+1,t-1]) \\
& \leq & w(\bar{C}_{\leq t-1} \cup U^{t-1}) 
\end{eqnarray*}\qed
\end{proof}

By Lemma \ref{r(n+1) UB} the output of the algorithm
is at most $w(\bar{C})+w(U))=\dmin({\bf \sigma},w)+w(U)$.
It now suffices to prove:
\begin{equation}
\label{goal}
\prob[w(U) \geq \delta w(\bar{C})] \leq \gamma/2.
\end{equation}
Call an interval $[i,j]$ {\em dangerous} if $w(C \cap [i,j]) \leq
w([i,j])(\delta/(1+\delta))$.   
In particular $[i,i]$ is dangerous iff $i \not\in C$.
Call an index $i$ dangerous
if it is the left endpoint of some dangerous interval.
Let $D$ be the set of all dangerous indices.

We define
a sequence $I_1,I_2,\ldots,I_\ell$ of disjoint dangerous intervals as follows.
If there is no dangerous interval then the sequence is empty. 
Otherwise:
\begin{itemize}
\item
Let $i_1$ be the smallest index in $D$ and let $I_1$
be the largest interval with left endpoint  
$i_1$.
\item Having chosen $I_1,...,I_j$, if $D$ contains
no index to the right of all of the chosen
intervals then stop.  Otherwise, let $i_{j+1}$ be the 
least index in $D$ to the right of all chosen intervals
and let $I_{j+1}$ be the largest dangerous interval
with left endpoint $i_{j+1}$.
\end{itemize}

It is obvious from the definition that each successive interval lies entirely
to the right of the previously chosen intervals.
Let $B=I_1 \cup \cdots \cup I_\ell$ and let $\bar{B}=[n]-B$.
We now make a series of observations:

\begin{claim}
\label{off chain is dangerous}
$\bar{C} \subseteq D \subseteq B$.
\end{claim}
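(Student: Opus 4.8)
The plan is to prove the two inclusions $\bar{C} \subseteq D$ and $D \subseteq B$ separately, since both are essentially definitional once the relevant definitions are unpacked.

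For $\bar{C} \subseteq D$: let $i \notin C$. I want to exhibit a dangerous interval with left endpoint $i$. The most economical choice is the singleton $[i,i]$. By definition, $[i,i]$ is dangerous iff $w(C \cap [i,i]) \leq w([i,i]) \cdot \delta/(1+\delta)$. Since $i \notin C$, the left side is $w(C \cap \{i\}) = 0$, and $0 \leq w([i,i]) \cdot \delta/(1+\delta)$ holds trivially because weights are nonnegative and $\delta/(1+\delta) \geq 0$. Hence $[i,i]$ is dangerous, so $i$ is a dangerous index, i.e.\ $i \in D$. This gives $\bar{C} \subseteq D$.

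For $D \subseteq B$: let $i \in D$, so $i$ is the left endpoint of some dangerous interval $[i,j]$. I need to argue $i \in B = I_1 \cup \cdots \cup I_\ell$. The key point is the greedy construction of the $I_k$'s: the intervals are chosen by repeatedly taking the leftmost remaining dangerous index not already covered, and extending maximally. I claim every dangerous index lies in some $I_k$. Suppose not, and let $i$ be a dangerous index not in any $I_k$. Since the construction only stops when $D$ contains no index to the right of all chosen intervals, and the $I_k$ are nested leftward in the natural order, $i$ must either lie strictly to the right of all chosen intervals (contradicting the stopping condition, since then $i$ would have triggered another round) or lie inside the span covered by some interval $I_k = [i_k, \cdot]$ with $i_k \leq i$ but $i$ beyond the right endpoint of $I_k$ — but then $i$ is to the right of $I_k$ and to the left of $I_{k+1}$'s left endpoint (or there is no $I_{k+1}$), so $i$ would have been selected as $i_{k+1}$, contradicting the choice of $i_{k+1}$ as the \emph{least} such index only if $i < i_{k+1}$; and indeed $i$ being to the right of all of $I_1,\dots,I_k$ and a member of $D$ forces $i_{k+1} \leq i$, and since $I_{k+1}$ is the largest dangerous interval with left endpoint $i_{k+1}$, if $i = i_{k+1}$ then $i \in I_{k+1}$, while if $i > i_{k+1}$ we recurse. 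Carefully: the cleanest phrasing is that the least dangerous index not covered by $I_1 \cup \cdots \cup I_j$ is exactly $i_{j+1}$ (by definition), and $i_{j+1} \in I_{j+1}$, so by induction on the construction no dangerous index escapes, giving $D \subseteq B$.

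The main obstacle is purely bookkeeping: making the $D \subseteq B$ argument airtight requires being precise about the ordering of the greedy intervals and the fact that "to the right of all chosen intervals" is the exact complement (within $D$) of "already covered," together with the maximality clause ensuring $[i_{j+1}, j]$ for any dangerous $[i_{j+1},j]$ is contained in $I_{j+1}$. None of this is deep, but it is the one place where a sloppy argument could have a gap, so I would state it as: by induction on $j$, every dangerous index that is $\leq$ the right endpoint of $I_j$ lies in $I_1 \cup \cdots \cup I_j$; the construction terminates only when no dangerous index lies beyond $I_\ell$; hence $D \subseteq B$. Combining the two inclusions yields the claim.
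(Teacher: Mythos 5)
Your proof is correct and follows essentially the same route as the paper: for $\bar{C}\subseteq D$ you observe that $[i,i]$ is dangerous whenever $i\notin C$ (the paper notes this directly after defining dangerous intervals), and for $D\subseteq B$ you use the greedy construction to argue no dangerous index can lie between or beyond the chosen intervals. The paper's $D\subseteq B$ argument is a bit more compact—taking the largest $j$ with $i$ to the right of $I_j$ and showing $i\in I_{j+1}$—but the content is identical to your induction.
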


\begin{claim}
\label{B weight}
$w(B) \leq w(\bar{C})(1+\delta)$.
\end{claim}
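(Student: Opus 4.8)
The plan is to prove Claim~\ref{B weight}, namely that $w(B) \leq (1+\delta) w(\bar{C})$, by analyzing the dangerous intervals $I_1,\ldots,I_\ell$ one at a time and then summing. Since $B = I_1 \cup \cdots \cup I_\ell$ is a disjoint union, we have $w(B) = \sum_{j=1}^\ell w(I_j)$, so it suffices to control each $w(I_j)$ in terms of the portion of $\bar{C}$ it captures.

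First I would observe that each $I_j$ is a dangerous interval, so by the very definition of ``dangerous'' we have $w(C \cap I_j) \leq \frac{\delta}{1+\delta} w(I_j)$. Rewriting, $w(I_j) - w(C \cap I_j) = w(\bar{C} \cap I_j) \geq w(I_j)\bigl(1 - \frac{\delta}{1+\delta}\bigr) = \frac{1}{1+\delta} w(I_j)$, which rearranges to $w(I_j) \leq (1+\delta)\, w(\bar{C} \cap I_j)$. Now I would sum this inequality over $j = 1,\ldots,\ell$. Because the intervals $I_j$ are pairwise disjoint, the sets $\bar{C} \cap I_j$ are also pairwise disjoint subsets of $\bar{C}$, so $\sum_{j=1}^\ell w(\bar{C} \cap I_j) = w(\bar{C} \cap B) \leq w(\bar{C})$. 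Combining, $w(B) = \sum_j w(I_j) \leq (1+\delta) \sum_j w(\bar{C} \cap I_j) \leq (1+\delta) w(\bar{C})$, which is exactly the claim. (If $\ell = 0$ then $B = \emptyset$ and the inequality is trivial.)

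The only subtlety worth double-checking — and the step I expect to require the most care — is the disjointness of the $I_j$, which is what licenses turning a sum of per-interval bounds into a single bound against $w(\bar{C})$; but this disjointness is precisely what was established right after the definition of the sequence (``each successive interval lies entirely to the right of the previously chosen intervals''), so no new work is needed there. Everything else is a one-line arithmetic manipulation of the defining inequality of a dangerous interval, and the nonnegativity of the weights ensures no sign issues arise. So overall this is a short proof: unpack the definition of dangerous for each $I_j$, rearrange, and sum using disjointness.
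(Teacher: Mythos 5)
Your proof is correct and follows essentially the same route as the paper: unpack the defining inequality of a dangerous interval for each $I_j$, rearrange to get $w(I_j) \leq (1+\delta)\,w(\bar{C}\cap I_j)$, and sum over the disjoint $I_j$. The paper's version is just a slightly terser rendering of exactly this argument.
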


\begin{claim}
\label{high probability}
$\prob[U \subseteq B] \geq 1-\gamma/2$.
\end{claim}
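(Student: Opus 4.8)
The plan is to show that every unsafe index lies in the dangerous set $B$, by establishing the contrapositive: if $i\in C$ and $i\notin B$, then $i$ is safe (with high probability). First I would fix $i\in C\setminus B$ and, for each $t>i$, bound the probability that $i$ is unsafe at time $t$, i.e., that every index of $C\cap[i,t]$ has been forgotten by time $t$. Since distinct indices are discarded independently, and by Proposition~\ref{forget prob} index $j$ survives to time $t$ with probability $q(j,t)$, the probability that $i$ is unsafe at time $t$ equals $\prod_{j\in C\cap[i,t]}(1-q(j,t))$. I would then use $1-q(j,t)\le e^{-q(j,t)}$ to get a bound of the form $\exp\left(-\sum_{j\in C\cap[i,t]}q(j,t)\right)$, and the goal is to show this exponent is at least, say, $\ln(4t^3/\gamma)$ whenever $[i,t]$ is \emph{not} dangerous — which is exactly the situation when $i\notin B$, since then $[i,t]$ is a non-dangerous interval for every $t>i$ (any dangerous interval with left endpoint $\le i$ would be absorbed into the greedy cover $B$, contradicting $i\notin B$; one must check the greedy construction carefully here, as it takes the \emph{largest} dangerous interval at each step).

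The heart of the calculation is the lower bound on $\sum_{j\in C\cap[i,t]}q(j,t)$. Plugging in the definition of $q(j,t)$, each term is (at least, when the $\min$ is not active) $\frac{1+\delta}{\delta}\ln(4t^3/\gamma)\,\frac{w(j)}{W(t)-W(j-1)}$. Since $W(t)-W(j-1)\le W(t)-W(i-1)=w([i,t])$ for $j\ge i$, each term is at least $\frac{1+\delta}{\delta}\ln(4t^3/\gamma)\,\frac{w(j)}{w([i,t])}$, so the sum is at least $\frac{1+\delta}{\delta}\ln(4t^3/\gamma)\,\frac{w(C\cap[i,t])}{w([i,t])}$. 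Because $[i,t]$ is not dangerous we have $w(C\cap[i,t])>w([i,t])\frac{\delta}{1+\delta}$, and the two factors of $\frac{1+\delta}{\delta}\cdot\frac{\delta}{1+\delta}$ cancel, leaving the exponent strictly larger than $\ln(4t^3/\gamma)$. I need to handle the case where the $\min$ in the definition of $q(j,t)$ clamps some term to $1$; but if any single term is $1$ then the exponent is already $\ge 1$, and more carefully one argues $q(j,t)\ge \frac{1+\delta}{\delta}\ln(4t^3/\gamma)\frac{w(j)}{w([i,t])}$ fails only when the clamp makes $q(j,t)=1$, which only helps. So $\prob[i\text{ unsafe at }t]\le e^{-\ln(4t^3/\gamma)}=\gamma/(4t^3)$.

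Finally I would union bound. For a fixed $i\in C\setminus B$, summing over $t>i$ gives $\prob[i\text{ unsafe}]\le\sum_{t>i}\gamma/(4t^3)\le \gamma/(4i^2)$ (or some similarly summable tail). Then summing over all $i\in C\setminus B\subseteq[n]$ gives $\prob[U\not\subseteq B]\le\sum_{i\ge 1}\gamma/(4i^2)=\frac{\gamma}{4}\cdot\frac{\pi^2}{6}<\gamma/2$, which is the claimed bound. (If the per-$i$ bound isn't quite summable as stated, the $t^3$ in the logarithm gives plenty of room — e.g.\ $\sum_{t}\gamma/(4t^3)\le\gamma/(4i^2)\cdot O(1)$ and $\sum_i 1/i^2$ converges — so the constants work out to $\gamma/2$.) The main obstacle I anticipate is not the probabilistic estimate, which is a clean Chernoff-type bound, but rather the bookkeeping around the greedy interval construction: making precise the assertion that $i\notin B$ forces $[i,t]$ to be non-dangerous for \emph{all} $t>i$, given that the greedy procedure always grabs the largest dangerous interval at each left endpoint it visits. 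This requires arguing that if $[i,t]$ were dangerous, then either $i$ or some index to its left would have been selected as an interval left endpoint by the greedy process, and the resulting interval would cover $i$ — so care is needed about indices strictly left of $i$ whose maximal dangerous interval might or might not reach $i$.
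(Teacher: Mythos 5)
Your proposal is correct and follows essentially the same approach as the paper: bound $\prob[i\in U^t]$ for $i\notin B$ by using the non-danger of $[i,t]$ to lower-bound each $q(j,t)$, apply $\prod_j(1-q(j,t))\le\exp\bigl(-\sum_j q(j,t)\bigr)$, get $\gamma/(4t^3)$ per pair $(i,t)$, and union bound (you sum over $t$ then $i$ where the paper sums over $i\in[t]$ then $t$, but both converge with room to spare). The one place you overcomplicate is the step "$i\notin B$ implies $[i,t]$ is not dangerous": this follows immediately from Claim~\ref{off chain is dangerous} ($D\subseteq B$), since if $[i,t]$ were dangerous then $i$ itself would be a dangerous left endpoint, hence $i\in D\subseteq B$ — so there is no need to reason about indices strictly to the left of $i$ or about the greedy cover grabbing maximal intervals; that bookkeeping is already isolated in the proof of Claim~\ref{off chain is dangerous}.
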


By Claims
\ref{off chain is dangerous} and \ref{high probability},
we have $U \cup \bar{C}  \subseteq B$ with probability 
at least $1-\gamma/2$, and so by Claim \ref{B weight},
$w(U \cup \bar{C}) \leq w(\bar{C})(1+\delta)$ with probability
at least $1-\gamma/2$, establishing (\ref{goal}).

Thus it remains to prove the claims.

\medskip
\noindent
{\bf Proof of Claim \ref{off chain is dangerous}}:
If $i \in \bar{C}$  then, as noted earlier, 
$i$ is dangerous so $i \in D$. 

Now suppose $i \in D$.
By the construction of the sequence of intervals,
there is at least one interval $I_1$ and the left endpoint $i_1$ is
at most $i$.
If $i \in I_1 \subseteq B$,  we're done.  So assume $i \not\in I_1$
and so $i$ is to the right of $I_1$.
Let $j$
be the largest index for which $i$ is to the
right of $I_j$.  Then $I_{j+1}$
exists and $i_{j+1} \leq i$.  Since $I_{j+1}$
is not entirely to the right of $i$ we must have $i \in I_{j+1} \subset B$.

\medskip
\noindent
{\bf Proof of Claim \ref{B weight}:}
For each $I_j$ we have $w(I_j \cap C) \leq w(I_j)\delta/(1+\delta)$.
Therefore $w(I_j \cap \bar{C}) \geq w(I_j)/(1+\delta)$ 
and so $(1+\delta)w(I_j \cap \bar{C}) \geq w(I_j)$.
Summing over $I_j$ we get $(1+\delta) w(\bar{C}) \geq w(B)$.

\medskip
\noindent
{\bf Proof of Claim \ref{high probability}:}
We fix $t \in [n]$ and 
$i \in \bar{B} \cap [t]$ and show $\prob[i \in U^t] \leq \frac{\gamma}{4t^3}$.
This is enough to prove the claim since we will then have:
\begin{eqnarray*}
\prob[U \subseteq B] & = & 1-\prob[\bar{B} \cap U \neq \emptyset] \\
& \geq & 1-\sum_{t=1}^n \prob[\bar{B} \cap U^t \neq \emptyset]\\
& \geq & 1-\sum_{t=1}^n \sum_{i \in \bar{B} \cap [t]} \prob[i \in U^t] \\
& \geq & 1-\sum_{t=1}^n \sum_{i \in \bar{B} \cap [t]} \frac{\gamma}{4t^3}\\
&\geq & 1 - \frac{\gamma}{4} \sum_{t=1}^n \frac{1}{t^2} 
\geq 1- \gamma/2.
\end{eqnarray*}
So fix $t$ and $i \in \bar{B} \cap [t]$.  Since $i \not \in B$,
the interval $[i,t]$ is not dangerous, and so $w(C \cap [i,t]) \geq 
w([i,t])\delta/(1+\delta)$, and so 
\begin{equation} \label{[i,t]}
w([i,t]) \leq \frac{1+\delta}{\delta} w(C \cap [i,t]).
\end{equation}

We have $i \in U^t$ only if every index
of $C \cap [i,t]$ is forgotten by time $t$.
For $j \leq t$,
the probability that index $j \in t$ has been forgotten
by time $t$ is $1-q(j,t)$ so $\prob[i \in U^t] = \prod_{j \in C \cap [i,t]}(1- q(j,t))$.
If $q(j,t)=1$ for any of the multiplicands then the product is 0.  Otherwise
for each $j \in C \cap [i,t]$:
\begin{eqnarray*}
q(j,t) & = & \frac{1+\delta}{\delta}\ln(4t^3/\gamma) \frac{w(j)}{(W(t)-W(j-1)} \\
& \geq & \ln(4t^3/\gamma) \frac{1+\delta}{\delta}\frac{w(j)}{w([i,t])}
\geq \ln(4t^3/\gamma) \frac{w(j)}{w(C \cap [i,t])},`
\end{eqnarray*}
where the final inequality uses (\ref{[i,t]}).
Therefore:
\begin{eqnarray*} 
\prob[i \in U(t)] & \leq & \prod_{j \in C \cap [i,t]}(1- q(j,t)) \\
& \leq & \exp(-\sum_{j \in C \cap [i,t]} q(j,t)) \leq \gamma/4t^3,
\end{eqnarray*} 
as required to complete the proof of Claim \ref{high probability}, and of the second assertion of Theorem \ref{thm:analysis}.
\section{Applying AMDP to LIS and LCS} \label{sec:app}

We now show how to apply Theorem~\ref{thm:main} to LIS and LCS. The application
to LIS is quite obvious. We first set some notation about points in the two-dimensional plane.
We will label the axes as $1$ and $2$, and for a point $z$, $z(1)$ (resp. $z(2)$) refers
to the first (resp. second) coordinate of $z$. We use the standard coordinate-wise
partial order on $z$. So $z \triangleleft z'$ iff $z(1) < z'(1)$ and $z(2) < z'(2)$.

\medskip

\begin{proof} (of Theorem~\ref{thm:lis}) The input is a stream $x(1), x(2), \ldots, x(n)$.
Think of the $i$th element of the stream as the point $(i,x(i))$. So the input
is thought of as a sequence of points. Note that the points arrive in increasing order
of first coordinate. Hence, a chain in this poset corresponds exactly to an increasing
sequence (and vice versa). We set $\gamma = n^{O(1)}$ and $\rho = n$ in Theorem~\ref{thm:main}.
\qed
\end{proof}
\medskip

The application to LCS is somewhat more subtle. Again, we think of the input as a set of points
in the two-dimensional plane. But this transformation will lead to a blow up in size, which
we counteract by choosing a small value of $\delta$.

\begin{theorem} \label{thm:lcs-app} Let $x$ and $y$ be two strings of length where each
character occurs at most $k$ times in $y$. Then there is a $O(\delta^{-1}k\log^2n)$-space algorithm 
for the asymmetric setting that outputs an additive
$\delta n$-approximation of $E(x,y)$.
\end{theorem}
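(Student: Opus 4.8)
The plan is to give a one-pass streaming reduction from asymmetric LCS to AMDP and then invoke Theorem~\ref{thm:main} with a rescaled accuracy parameter. As in the proof of Theorem~\ref{thm:lis}, I would work with points in the plane under the strict coordinatewise order $\triangleleft$, and identify common subsequences of $x$ and $y$ with chains in the set of \emph{agreement points} $P_{xy}=\{(i,j)\in[n]\times[n]:x(i)=y(j)\}$. A $\triangleleft$-chain $(i_1,j_1)\triangleleft\cdots\triangleleft(i_\ell,j_\ell)$ has strictly increasing first and second coordinates, so it is exactly a common subsequence of length $\ell$, and conversely every common subsequence yields such a chain; hence $LCS(x,y)$ equals the length of the longest chain in $P_{xy}$, and correspondingly $E(x,y)=n-LCS(x,y)$.

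Next I would describe how to feed $P_{xy}$ to an AMDP algorithm in a single pass over $x$. At time step $i$, after reading $x(i)$, use random access to $y$ to list the positions $j_1<\cdots<j_m$ with $y(j)=x(i)$; by hypothesis $m\le k$. Emit the points $(i,j_1),\ldots,(i,j_m)$ as $m$ consecutive unit-weight terms of the $P$-sequence ${\bf \sigma}$. The total length is $N:=|P_{xy}|\le nk$, and $\rho=N$. Points sharing a first coordinate are $\triangleleft$-incomparable, so the order chosen within a block is irrelevant; and the blocks are emitted in increasing order of first coordinate, so whenever $p\triangleleft q$ the point $p$ precedes $q$ in the stream. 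Consequently the ${\bf \sigma}$-paths of $D({\bf \sigma})$ are exactly the chains of $P_{xy}$, and $\dmin({\bf \sigma},w)=N-LCS(x,y)$.

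Then I would run the one-sided AMDP algorithm of Theorem~\ref{thm:main} on ${\bf \sigma}$ with accuracy parameter $\delta'=\delta/k\in(0,1]$ (note $\delta\le 1\le k$) and error parameter $\gamma=n^{-\Omega(1)}$, while separately maintaining the exact count $N$ as the stream goes by. The algorithm returns $A\ge\dmin({\bf \sigma},w)$ always, and with probability at least $1-\gamma$ also $A\le(1+\delta')\dmin({\bf \sigma},w)$. Output $\widehat E:=n-N+A$. Since $E(x,y)=n-N+\dmin({\bf \sigma},w)$, we get $\widehat E\ge E(x,y)$ always, and with high probability $\widehat E-E(x,y)=A-\dmin({\bf \sigma},w)\le\delta'\,\dmin({\bf \sigma},w)\le\delta' N\le(\delta/k)\,nk=\delta n$, as required. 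For the space bound: Theorem~\ref{thm:main} uses $O(\ln(N/\gamma)\ln(\rho)/\delta')$ cells, and since $k\le n$ we have $N\le n^2$, so $\ln(N/\gamma)=O(\log n)$ and $\ln\rho=O(\log n)$, giving $O(\log^2 n/\delta')=O(k\,\delta^{-1}\log^2 n)$; this also absorbs the $O(k)$ workspace used to hold the occurrences of $x(i)$ in $y$. (Edge cases such as disjoint alphabets, where $N=0$ and the reduction emits nothing, are handled correctly since then $\widehat E=n$.)

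The main obstacle, and essentially the only content beyond bookkeeping, is the size blow-up introduced by the reduction: turning $n$ positions into up to $nk$ agreement points inflates the defect, so $\dmin({\bf \sigma},w)$ can be as large as $nk$, and \emph{any} fixed-constant-factor (or even $(1+c)$ for a fixed $c>0$) approximation of $\dmin$ would translate into additive error $\Theta(nk)$ for $E(x,y)$, which is useless. It is precisely the arbitrarily-good multiplicative guarantee of Theorem~\ref{thm:main} that lets us compensate by shrinking the accuracy parameter from $\delta$ to $\delta/k$, paying only a factor $k$ in space; verifying this trade-off (and that $\widehat E$ is a nonnegative one-sided estimate, immediate from $A\ge\dmin({\bf \sigma},w)\ge 0$) is the crux, while the remaining verifications are routine.
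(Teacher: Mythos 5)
Your proposal is correct and is essentially the paper's own proof: the same reduction to agreement points $(i,j)$ with $x(i)=y(j)$ emitted lexicographically, the same invocation of Theorem~\ref{thm:main} with accuracy $\delta/k$ to absorb the $|P|\le nk$ blow-up, and the same final estimate $n-|P|+A$ with the same two-sided bound. The only additions beyond what the paper writes are minor bookkeeping (the $O(k)$ buffer, the degenerate $N=0$ case).
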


\begin{proof} We show how to convert an instance of approximating $E(x,y)$ in the asymmetric model to an instance of AMDP.
Let $P$ be the set of pairs $\{(i,j)|x(i)=y(j)\}$ under the partial order $(i,j) < (i',j')$ if
$i < i'$ and $j < j'$.  It is easy to see that  common subsequences of $x$ and $y$ correspond
to chains in this poset.

Now we associate to the pair of strings $x,y$ the sequence $\sigma$ consisting of points in $P$
listed lexicographically ($(i,j)$ precedes $(i',j')$ is $i < i'$ or if $i=i'$ and $j < j'$.)  Note that $\sigma$
can be constructed online given streaming access to $x$: when $x(i)$ arrives we generate
all pairs with first coordinate $i$ in order by second coordinate.
Again it is easy to check that common subsequences of $x$ and $y$ correspond to $\sigma$-paths
as defined in the AMDP.  Thus the length of the LCS is equal to the size of the
largest $\sigma$-path.  It is not true that $E(x,y)$ is equal to  $min-defect(\sigma)$ (here we omit the weight
function, which we take to be identically 1), because the length of $\sigma$ is in general longer than $n$.
Given full access to $y$, and a streamed $x$.  We have a bound on $|\sigma|$ of $nk$
since each symbol appears at most $k$ times in $x$.

%

We now argue that an additive $\delta n$-approximation for $E(x,y)$ can be obtained
from a $(1+\delta/k)$-approximation for AMDP of $P$. Let the length of the longest
chain in $P$ be $\ell$ and the min-defect be $m$. Let $d$ be a shorthand for $E(x,y)$. 
We have $\ell + m = |P|$ and $\ell + d = n$. The output of AMDP is an estimate $est$
such that $m \leq est \leq (1+\delta/k)m$. We estimate $d$ by $est_d = est + n - |P|$.
We show that $est_d \in [d,d + \delta n]$.

We have $est_d = est + n - |P| \geq m + n - |P| = n-\ell = d$. We can also get an
upper bound.
\begin{eqnarray*}
est_d & = & est + n - |P| \\
& \leq & m + n - |P| + \delta m/k  \\
& = & d + \delta m/k \ \textrm{(since $|P| - m = \ell$ and $d = n - \ell$)} \\
& \leq & d + \delta n \ \textrm{(since $m \leq |P| \leq nk$)}
\end{eqnarray*}
Hence, we use the parameters $\delta/k, \gamma = n^{O(1)}$ for the AMDP instance
created by our reduction. An application of Theorem~\ref{thm:main} completes the proof.
\qed
\end{proof}
\section{Deterministic streaming algorithm for LCS} \label{sec:lcs}
We now discuss a deterministic $\sqrt{n}$-space algorithm for LCS. This
can be used for large alphabets to beat the bound given in Theorem~\ref{thm:lcs-app}.
For any consistent sequence (CS), the
size of the complement is called the \emph{defect}.
For indices $i,j \in [n]$, $x(i,j)$
refers to the substring of $x$ from the $i$th character up to the 
$j$th character. The main theorem is:

\begin{theorem} \label{thm:lcs} Let $\delta > 0$. We have strings $x$ and $y$
with full access to $y$ and streaming access to $x$. There is a deterministic one-pass streaming algorithm that computes a $(1+\delta)$-approximation to $E(x,y)$
that uses $O(\sqrt{(n \ln n)/\delta})$ space. The algorithm performs $O(\sqrt{(\delta n)/\ln n})$
updates, each taking $O(n^2\ln n/\delta)$ time.
\end{theorem}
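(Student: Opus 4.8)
The plan is to mimic the standard $O(n^2)$ dynamic program for LCS/edit distance, but to store only a sublinear-size ``sketch'' of the DP table, updating it in chunks. Recall the classical DP: let $L(i,j) = \mathrm{LCS}(x(1,i), y(1,j))$; the table satisfies $L(i,j) = \max\{L(i-1,j), L(i,j-1), L(i-1,j-1) + [x(i)=y(j)]\}$. Equivalently, working with defects, let $D(i,j) = i + j - 2L(i,j)$ be the edit distance between the prefixes, so that the answer is $D(n,n)$. The key classical fact I would use is that the boundary data needed to extend the computation past column $i$ is a single ``frontier'' row, the vector $(L(i,j))_{j \in [n]}$ (or its defect version), and this vector is monotone nondecreasing in $j$ with increments in $\{0,1\}$. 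First I would set the chunk size to $b = \Theta(\sqrt{n \ln n / \delta})$ and partition the streamed string $x$ into $\lceil n/b \rceil$ consecutive blocks of length $b$; random access to $y$ lets us consult all of $y$ at each stage.

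Next, the core idea: rather than storing the full frontier vector exactly (which costs $\Omega(n)$), I would store a coarsened frontier that records $L(i, \cdot)$ only at a sparse set of ``breakpoints'' — the $j$-values at which $L(i,j)$ increases — but keeping only about every $(\delta n / \ln n)$-th such breakpoint, i.e., $O(\sqrt{(n\ln n)/\delta})$ of them. This introduces a controlled additive error of at most $\Theta(\delta n / \ln n)$ per recorded level. The update step, upon receiving a full block of $b$ characters of $x$, recomputes the exact DP on the $b \times n$ strip whose left boundary is the current (coarsened, hence slightly pessimistic) frontier; this takes $O(bn)$ time per entry-bookkeeping, and over the whole strip $O(b n^2)$-ish time — matching the claimed $O(n^2 \ln n / \delta)$ per update after substituting $b$ — and $O(n)$ working space during the recomputation of one strip, but we then re-coarsen the output frontier back down to $O(\sqrt{(n\ln n)/\delta})$ size before discarding the strip. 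With $O(n/b) = O(\sqrt{\delta n / \ln n})$ updates, the update count matches. I would argue the error accumulates additively over the $O(n/b)$ strips but the per-strip loss is designed to be $\delta n / (\text{number of strips} \cdot \text{something})$, so the total additive error is $O(\delta n)$; since $\mathrm{LCS} \le n$ and one can check the answer is never underestimated (coarsening only makes the recorded defect larger, exactly as in the one-sided-error analysis of $\Gamma$), an additive $\delta n$ bound converts to a $(1+\delta')$-multiplicative bound on $E(x,y) = D(n,n)$ after rescaling $\delta$, because we only lose multiplicatively when $E(x,y)$ is small, and when $E(x,y) < \delta n$ we can afford to just run a different cheap check, or absorb it.

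The step I expect to be the main obstacle is making the coarsening/re-coarsening interact cleanly with the exact strip recomputation so that (a) the error is genuinely one-sided (the stored frontier dominates the true frontier pointwise, i.e., records defects that are $\ge$ truth) and (b) the per-strip error increment is bounded by the target $\delta n / (n/b)$ rather than blowing up multiplicatively when composed across strips. The cleanest way to handle (a) is to maintain, instead of $L$-values, the vector of defects and to round each defect \emph{up} to the nearest multiple of the coarsening granularity $g = \Theta(\delta b)$; monotonicity of the frontier in $j$ guarantees that the rounded frontier is still a valid (dominating) left boundary for the next strip's exact DP, and that it is representable by $O(n/g)$ breakpoints. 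For (b), I would show by induction on strips that the stored frontier exceeds the true frontier by at most (number of strips processed so far)$\times g$, and choose $g$ so that $(n/b) \cdot g = O(\delta n)$, i.e., $g = O(\delta b)$, which is consistent with the $b = \Theta(\sqrt{n\ln n/\delta})$ choice up to the $\ln n$ factor needed to drive error probability / granularity slack. The remaining bookkeeping — that recomputing one $b \times n$ strip with a breakpoint-compressed boundary takes the stated time, and that the compressed frontier genuinely fits in $O(\sqrt{(n\ln n)/\delta})$ memory cells — is routine and I would defer it.
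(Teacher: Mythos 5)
Your overall architecture — chunk $x$ into blocks of size $\bar{n}=\Theta(\sqrt{n\ln n/\delta})$, keep only a compressed frontier of the LCS DP between blocks, recompute a strip and re-compress — matches the paper's. But the way you compress the frontier breaks the multiplicative guarantee, and this is exactly the point where the paper does something more careful.

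You propose rounding the stored frontier defects up to multiples of a \emph{fixed} granularity $g=\Theta(\delta \bar{n})$, and you argue the total accumulated error over $n/\bar{n}$ strips is $O(\delta n)$. That is an \emph{additive} $\delta n$ bound, and (as you notice yourself at the end) it is not a $(1+\delta)$-multiplicative bound on $E(x,y)$: if $E(x,y)=\Theta(\sqrt{n})$, say, the additive $\delta n$ error already dwarfs the quantity you are trying to estimate. Your suggested patch — ``when $E(x,y)<\delta n$ run a different cheap check, or absorb it'' — does not exist in this model; there is no obvious sublinear-space deterministic streaming test for small edit distance, and ``absorbing it'' is precisely what a multiplicative bound forbids. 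So as written the proposal proves a weaker theorem.

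The missing idea is to place the anchor points (your breakpoints) \emph{geometrically around the diagonal}. For block boundary $p\bar{n}$ in $x$, the paper stores frontier values only at $y$-indices of the form $p\bar{n}\pm(1+\mu)^r$ with $\mu=\Theta(\sqrt{\delta\ln n/n})$, and it then proves (Claim~\ref{clm:defect-gap}) the key geometric fact that if the optimal LCS matches the first index $\geq p\bar{n}$ of $x$ to position $j$ of $y$, then $E(x,y)\geq |j-p\bar{n}|$. Since the anchor spacing near position $p\bar{n}+d$ is $\Theta(\mu d)$, the rounding loss in each block is at most $\mu\cdot|j-p\bar{n}|\leq \mu E(x,y)$, and summing over $n/\bar{n}$ blocks gives total loss $\mu(n/\bar{n})E(x,y)=\delta E(x,y)$ — a genuinely multiplicative bound. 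In other words, the coarsening granularity must scale with the distance from the diagonal, not be uniform, precisely because that distance is itself bounded by the quantity you are approximating. Your fixed-$g$ scheme has no analogue of this and cannot be rescued by rescaling $\delta$.

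One more, smaller, issue: you write that the strip recomputation uses $O(n)$ working space ``but we then re-coarsen.'' The space bound in the theorem is $O(\sqrt{(n\ln n)/\delta})$, so you cannot afford $O(n)$ even transiently. The paper avoids this by evaluating the new frontier only at the $O(\bar{n})$ anchor points of the next block, streaming through $y$ and keeping a DP column indexed by the \emph{short} side $\bar{x}$ (length $\bar{n}$), never materializing a length-$n$ frontier. Your data-dependent choice of breakpoints (``the $j$-values at which $L(i,j)$ increases, sub-sampled'') seems to require seeing the full output frontier, which is another reason to switch to a fixed (diagonal-centered, geometric) anchor set.
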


The following claim is a direct consequence of the standard dynamic
programming algorithm for LCS \cite{CLRS}.

\begin{claim} \label{clm:basic} Suppose we are given two strings $x$
and $y$, with complete access to $y$ and a one-pass stream through $x$.
There is an $O(n)$-space algorithm that guarantees the following:
when we have seen $x(1,i)$, we have the lengths of the LCS between
$x(1,i)$ and $y(1,j)$, for all $j \in [n]$.
\end{claim}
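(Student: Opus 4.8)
The plan is to establish Claim~\ref{clm:basic} directly from the textbook $O(n^2)$ dynamic program for LCS, observing that the computation can be reorganized so that it processes $x$ one character at a time and never needs to look back at earlier characters of $x$. First I would set up the standard DP table $L[i,j] = \lis$ length of $\lis(x(1,i), y(1,j))$, with the recurrence $L[i,j] = L[i-1,j-1]+1$ if $x(i)=y(j)$, and $L[i,j] = \max(L[i-1,j], L[i,j-1])$ otherwise, together with the boundary conditions $L[0,j]=L[i,0]=0$. The key structural observation is that row $i$ of this table, namely the vector $(L[i,1], L[i,2], \ldots, L[i,n])$, depends only on row $i-1$ and on the single character $x(i)$ (plus full knowledge of $y$, which we have by random access). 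Hence we may maintain just one row vector of length $n$, costing $O(n)$ memory cells, and when the stream delivers $x(i)$ we overwrite the stored row $i-1$ with row $i$ in a single left-to-right sweep over $j \in [n]$.

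The second step is to spell out that this sweep is computable within the one-pass streaming model as described: upon receiving $x(i)$, for each $j$ from $1$ to $n$ we query $y(j)$ (random access to $y$ is permitted), compare it to $x(i)$, and apply the recurrence using the previously stored value for column $j-1$ of the \emph{new} row (already computed this sweep) and the stored values for columns $j-1$ and $j$ of the \emph{old} row. A standard in-place trick — keeping a single scalar holding the ``diagonal'' entry $L[i-1,j-1]$ before it is overwritten — lets this run with exactly one length-$n$ array plus $O(1)$ extra cells, so the space is $O(n)$ as claimed. After processing $x(i)$ the stored array holds precisely $(L[i,1],\ldots,L[i,n])$, i.e.\ the LCS lengths between $x(1,i)$ and every prefix $y(1,j)$, which is exactly the guarantee in the claim statement.

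There is essentially no hard part here: correctness of the recurrence is the classical LCS dynamic program \cite{CLRS}, and the only thing to check is the bookkeeping that the single-row representation suffices, which is the well-known ``two-row'' (here collapsed to one-row) space optimization of LCS. If anything merits a sentence of care, it is confirming that we genuinely need only the previous row and not the previous column history of $x$ — but this is immediate from the recurrence, since $L[i,\cdot]$ references only $L[i-1,\cdot]$ and $L[i,\cdot]$ itself at smaller column index. I would therefore present the proof as a short verification: state the table, state the recurrence and boundary, note row $i$ is a function of row $i-1$ and $x(i)$, describe the in-place update, and conclude.

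\begin{proof}(of Claim~\ref{clm:basic})
For $i,j \in \{0,1,\ldots,n\}$ let $L[i,j]$ denote the length of the longest common subsequence of $x(1,i)$ and $y(1,j)$, with $x(1,0)$ and $y(1,0)$ interpreted as the empty string. The standard LCS dynamic program \cite{CLRS} gives $L[0,j] = L[i,0] = 0$ and, for $i,j \geq 1$,
\[
L[i,j] = \begin{cases} L[i-1,j-1] + 1 & \text{if } x(i) = y(j),\\ \max\bigl(L[i-1,j],\, L[i,j-1]\bigr) & \text{otherwise.}\end{cases}
\]
The entire row $(L[i,1],\ldots,L[i,n])$ is determined by the row $(L[i-1,1],\ldots,L[i-1,n])$ together with the single character $x(i)$ and the string $y$, to which we have random access. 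We therefore maintain one array $A[1..n]$ of $n$ cells, initialized to the zero vector (this represents row $0$), plus $O(1)$ auxiliary cells. When $x(i)$ arrives in the stream, we sweep $j$ from $1$ to $n$: we keep a scalar holding the pre-update value of $A[j-1]$ (which equals $L[i-1,j-1]$), query $y(j)$, and set $A[j]$ according to the recurrence, reading $L[i,j-1]$ from the just-updated $A[j-1]$ and $L[i-1,j]$ from the current $A[j]$ before overwriting it. After this sweep, $A[j] = L[i,j]$ for all $j \in [n]$, i.e.\ $A[j]$ is the length of the LCS between $x(1,i)$ and $y(1,j)$. The algorithm uses $O(n)$ cells and makes one left-to-right pass over $x$.
\qed
\end{proof}
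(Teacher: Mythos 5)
Your proof is correct and is exactly what the paper intends: the paper simply asserts Claim~\ref{clm:basic} as a ``direct consequence of the standard dynamic programming algorithm for LCS,'' and your write-up spells out that standard argument — the row-by-row dependence of the DP table, collapsed to a single length-$n$ array with the usual diagonal-scalar trick. (One minor slip in your planning paragraph: you wrote $\lis$ where you meant LCS; the proof body itself states it correctly.)
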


Our aim is to implement (an approximation of) this algorithm in sublinear space. As before, we maintain
a carefully chosen portion of the $O(n)$-space used by the algorithm. In some sense,
we only maintain a small subset of the partial solutions. Although we do not explicitly
present it in this fashion, it may be useful to think of the reduction of Theorem~\ref{thm:lcs-app}.
We convert an LCS into finding the longest chain in a set of points $P$. We construct a set
of \emph{anchor points} in the plane, which may not be in $P$. Our aim is to just 
maintain the longest chain between pairs of anchor points.

Let $\delta > 0$ be some fixed parameter. We set $\bar{n} = \sqrt{(n\ln n)/\delta}$
and $\mu = (\ln n)/\bar{n} = \sqrt{(\delta \ln n)/n}$.
For each $i \in [n/\bar{n}]$, the set $S_i$ of indices is defined as follows.
$$ S_i = \{\flo{i\bar{n} + b(1+\mu)^r} \big| r \geq 0, b \in \{-1, +1\} \} $$ 
For convenience, we treat $\bar{n}$, $n/\bar{n}$, and $(1+\mu)^r$ as 
integers\footnote{Formally, we need to take floors of these quantities. Our analysis
remains identical.}. So we can drop the floors used in the definition of $S_i$.
Note that the $|S_i| = O(\mu^{-1}\ln n) = O(\bar{n})$. We refer to the family
of sets $\{S_1, S_2, \ldots \}$ by $\cS$. This is the set of anchor points that
we discussed earlier. Note that they are placed according to a geometric grid.

\begin{definition} \label{def:proper} A common subsequence of $x$ and $y$ is
consistent with $\cS$ if the following happens. There exists a sequence
of indices $\ell_1 \leq \ell_2 \leq \ldots \ell_m$ such that $\ell_i \in S_i$
and if character $x(k)$ ($k \in [i\bar{n}, (i+1)\bar{n}]$) in the common subsequence 
is matched to $y(k')$, then $k' \in [\ell_i, \ell_{i+1}]$.
\end{definition} 

We have a basic claim about the LCS of two strings (proof deferred to Appendix \ref{sec:lcs-proof}). This gives us a simple bound on the defect that we shall exploit.
Lemma~\ref{lem:proper} makes an important argument. It argues that the
the anchor points $\cS$ were chosen such that an
$\cS$-consistent sequence is ``almost" the LCS.
\begin{claim} \label{clm:defect-gap} Suppose that $x(i_1), x(i_2), \ldots, x(i_r)$ and $y(j_1), y(j_2), \ldots, y(j_r)$ are identical subsequences of $x$ and $y$, respectively.  Let $i \in [n]$ be arbitrary
and let  $i_a$ be the smallest index of the $x$ subsequence such that $i_a \geq i$.
The defect $n-r$ is at least $|j_a - i|$.
\end{claim}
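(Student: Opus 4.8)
\textbf{Proof proposal for Claim~\ref{clm:defect-gap}.}

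The plan is to produce an explicit injection from the identical subsequences into a common subsequence of the \emph{prefixes} (or suffixes) of $x$ and $y$ that are ``misaligned'' by the amount $|j_a-i|$, and then conclude that the complement of any such common subsequence — in particular the one we have — must have size at least $|j_a-i|$. Concretely, I would split into the two cases $j_a \geq i$ and $j_a < i$. Consider first $j_a \ge i$. The indices $i_a, i_{a+1}, \dots, i_r$ of the $x$-subsequence all lie in $[i, n]$, while their partners $j_a, j_{a+1}, \dots, j_r$ all lie in $[j_a, n]$. Since the $x$-side indices $i_1 < \dots < i_{a-1}$ that were dropped number exactly $a-1$, and on the $y$-side the indices $1, \dots, j_a - 1$ that are ``below'' $j_a$ number $j_a - 1 \ge j_a - 1$, I want to exploit the fact that $i_a \ge i$ forces $a - 1 \le i_a - 1$ wait — more carefully: since $i_1 < i_2 < \cdots < i_a$ are distinct positive integers with $i_a \ge i$, we actually get no constraint that way; the useful direction is that $i_a$ is the \emph{smallest} index $\ge i$, so $i_1, \dots, i_{a-1} < i$, hence $a - 1 \le i - 1$, i.e. $a \le i$.

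With $a \le i$ in hand, the matched pairs $(i_a, j_a), \dots, (i_r, j_r)$ form a common subsequence of $x(i, n)$ and $y(j_a, n)$ of length $r - a + 1$. The defect is $n - r = (n - i + 1) - (r - a + 1) - (a - 1) + \text{(correction)}$; I would instead bound it directly: $n - r \ge (j_a - i)$ can be seen by noting that the $x$-subsequence uses only $r - a + 1$ of the $n - i + 1$ positions in $[i, n]$, so it misses at least $(n - i + 1) - (r - a + 1) = n - r - (a - 1) + (i - 1) - \dots$. The cleanest route is: the $r$ matched $x$-indices include at most $i_a - 1 \ge i - 1$ — no. Let me state the intended inequality chain plainly: the number of dropped $x$-indices is $n - r$; among these, $i_1, \dots, i_{a-1}$ lie strictly below $i$, so at most $n - r - (a-1)$ dropped indices lie in $[i,n]$; hence at least $(n - i + 1) - (n - r - (a-1)) = r + a - i$ of the indices in $[i,n]$ are used. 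On the $y$-side, the used partners $j_a < j_{a+1} < \cdots < j_r$ are $r - a + 1$ distinct integers all $\ge j_a$, so they lie in $[j_a, n]$, giving $r - a + 1 \le n - j_a + 1$, i.e. $r - a \le n - j_a$. Combining, $r + a - i \le r - a + 1 + \text{stuff}$... The arithmetic wants to be organized as: $n - r \ge (n - j_a) - (r - a) \ge$ and separately $n - r \ge$ something with $i$; then add. I expect the honest bookkeeping to collapse to $n - r \ge j_a - i$ after adding the two side counts, and symmetrically for $j_a < i$ by running the same argument on suffixes/the reversed strings.

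The main obstacle is precisely this index bookkeeping: making sure the two ``one-sided'' counts (how many $x$-positions in $[i,n]$ are unused, how many $y$-positions below $j_a$ are forced to be unused) are combined without double-counting, and handling the case $j_a < i$ by the mirror argument rather than re-deriving it. Everything else is elementary — there is no probabilistic or structural ingredient, just the pigeonhole-flavored observation that a common subsequence of two length-$n$ strings whose matched portions are offset by $\Delta$ must leave at least $\Delta$ symbols unmatched. I would write the $j_a \ge i$ case in full and then say ``the case $j_a < i$ is symmetric, applying the same argument to the reversed strings,'' which is legitimate here since the statement of the claim is symmetric under simultaneously reversing $x$, $y$ and replacing each index $k$ by $n+1-k$.
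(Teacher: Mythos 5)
There is a genuine gap here. Your plan is morally sound (the ``offset'' observation that a common subsequence of two length-$n$ strings whose matched pairs are offset by $\Delta$ must leave at least $\Delta$ unmatched symbols is exactly Claim~\ref{clm:defect}, and the observation $a \leq i$ is correct and useful), but the bookkeeping you write down is wrong and you do not finish it. Specifically, in the sentence ``the number of dropped $x$-indices is $n-r$; among these, $i_1,\dots,i_{a-1}$ lie strictly below $i$, so at most $n-r-(a-1)$ dropped indices lie in $[i,n]$'' you have conflated the \emph{matched} indices $i_1,\dots,i_{a-1}$ with \emph{dropped} indices: $i_1,\dots,i_{a-1}$ are matched, not dropped, so the subtraction $n-r-(a-1)$ has no meaning. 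The phrase ``I expect the honest bookkeeping to collapse to $n-r\geq j_a-i$'' confirms the argument was never completed. The claimed symmetry under string reversal is also more delicate than you suggest: reversing sends ``smallest matched index $\geq i$'' to ``largest matched index $\leq n+1-i$'', which shifts the anchor from $(i_a,j_a)$ to $(i_{a-1},j_{a-1})$; this still implies what is wanted because $j_{a-1}\leq j_a-1$, but that extra step must be carried out, and the degenerate case $a=1$ (no matched index below $i$) must be handled separately.

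For the record, the paper takes a different, shorter route: it first proves Claim~\ref{clm:defect} (defect $\geq \max_k|i_k-j_k|$), then combines two simple lower bounds on the defect, namely $|j_a-i_a|$ (from Claim~\ref{clm:defect}) and $i_a-i$ (the positions $[i,i_a)$ of $x$ are all unmatched). When $j_a\leq i_a$ one of these two already dominates $|j_a-i|$; when $j_a>i_a$ it observes that the $n-i_a$ positions of $x$ in $(i_a,n]$ can only be matched into $(j_a,n]$, so at least $j_a-i_a$ of them are unmatched, and adding the $i_a-i$ unmatched positions below $i_a$ gives $j_a-i$. Your approach of directly counting unmatched $y$-positions can be made to work and is arguably just as clean once corrected: using $a\leq i$ (from $i_1<\dots<i_{a-1}<i$) one gets at least $j_a-a\geq j_a-i$ unmatched $y$-positions in $[1,j_a-1]$ when $j_a\geq i$; using $a\geq r+i-n$ (from $i\leq i_a<\dots<i_r\leq n$) one gets at least $(n-j_a)-(r-a)\geq i-j_a$ unmatched $y$-positions in $(j_a,n]$ when $j_a<i$. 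But as written, your proposal neither states this correctly nor completes it.
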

\vspace{-5pt}
\begin{lemma} \label{lem:proper} There exists an $\cS$-consistent common subsequence of $x$
and $y$ whose defect is at most $(1+\delta)E(x,y)$.
\end{lemma}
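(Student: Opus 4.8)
\textbf{Proof proposal for Lemma~\ref{lem:proper}.}

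The plan is to take an optimal common subsequence of $x$ and $y$ — i.e.\ one of length $LIS$-type size $\ell = n - E(x,y)$ realizing the minimum defect — and surgically modify it so that it becomes $\cS$-consistent while losing only a $\delta$-fraction of extra matches. Concretely, fix an optimal matching that pairs a subsequence $x(i_1),\dots,x(i_\ell)$ with $y(j_1),\dots,y(j_\ell)$, with $i_1 < \dots < i_\ell$ and $j_1 < \dots < j_\ell$. For each block boundary $i\bar{n}$ (for $i \in [n/\bar{n}]$), I would look at where the matching ``crosses'' that vertical line: let $i_{a}$ be the first matched $x$-index with $i_a \ge i\bar{n}$, and consider the value $j_{a}$ to which it is matched. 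The natural candidate for the anchor index $\ell_i \in S_i$ is the element of $S_i$ nearest to $j_a$ (from below, say). Because $S_i$ is a two-sided geometric grid with ratio $(1+\mu)$ centered at $i\bar{n}$, the nearest grid point below $j_a$ is within a multiplicative $(1+\mu)$ factor of the ``distance scale'' $|j_a - i\bar n|$; I'll want to phrase this as: the gap between consecutive grid points near $j_a$ is at most $\mu \cdot |j_a - i\bar n| + O(1)$.

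The next step is to bound the total damage. For each block $i$, redefining the anchor to the nearby grid point $\ell_i$ may force us to discard matches near the block boundary: specifically, matches $x(k)$ with $k \in [i\bar n, (i+1)\bar n]$ whose partner $y(k')$ falls outside $[\ell_i,\ell_{i+1}]$. The number of such discarded matches in block $i$ is at most the sum of the two one-sided grid-gap errors at boundaries $i$ and $i+1$, which by the previous paragraph is $O(\mu |j_a - i\bar n|) + O(1)$ where $i_a$ is the relevant crossing index. Now I invoke Claim~\ref{clm:defect-gap}: $|j_a - i\bar n| \le n - \ell = E(x,y)$. Hence each block contributes at most $O(\mu E(x,y)) + O(1)$ lost matches, and summing over the $n/\bar n$ blocks gives total loss $O(\mu E(x,y) \cdot n/\bar n) + O(n/\bar n)$. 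Plugging in $\mu = (\ln n)/\bar n$ and $\bar n = \sqrt{(n\ln n)/\delta}$: the first term is $O((n\ln n/\bar n^2) E(x,y)) = O(\delta E(x,y))$ and the second is $O(\sqrt{\delta n/\ln n})$. After rescaling $\delta$ by a constant, both are absorbed into $\delta E(x,y)$ — except the additive $O(n/\bar n)$ term needs the observation that either $E(x,y) \ge n/\bar n$ (so it's fine) or $E(x,y)$ is tiny, in which case we may need a separate easy argument (e.g.\ the all-of-$x$ trivial subsequence, or noting $\ell_1 = \dots$ can be chosen to exactly track a near-perfect matching). I would handle this boundary case explicitly.

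The remaining step is bookkeeping: verify that the surviving matches, together with the chosen $\ell_1 \le \dots \le \ell_{n/\bar n}$, genuinely satisfy Definition~\ref{def:proper} — i.e.\ the $\ell_i$ are nondecreasing (which follows because $j_a$ is nondecreasing in $i$ and we round consistently, modulo a tiny correction if two rounded values invert, in which case we just merge them), each $\ell_i \in S_i$, and every retained match in block $[i\bar n,(i+1)\bar n]$ has its $y$-partner in $[\ell_i,\ell_{i+1}]$ (true by construction since we deleted exactly the violators).

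The main obstacle I expect is the error accounting at the block boundaries: making precise that discarding the $y$-partners outside $[\ell_i,\ell_{i+1}]$ costs only the grid-gap amount, and correctly relating the relevant distance $|j_a - i\bar n|$ for \emph{the crossing match of block $i$} back to $E(x,y)$ via Claim~\ref{clm:defect-gap}. One has to be careful that the crossing index $i_a$ used at boundary $i$ and the one at boundary $i+1$ may differ, and that the monotonicity of $j$ along the matching is what keeps the rounded anchors ordered; getting the constants and the additive $O(1)$-per-block terms to collapse cleanly into $(1+\delta)E(x,y)$ rather than $(1+\delta)E(x,y) + O(\bar n)$ is the delicate part.
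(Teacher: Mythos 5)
Your proposal is correct and follows essentially the same route as the paper: round the optimal matching's crossing value $j_a$ at each block boundary down to the nearest point of $S_p$, discard the violating matches, and bound the per-block loss by the local grid gap combined with Claim~\ref{clm:defect-gap}. The additive $O(1)$-per-block term you worry about does not actually arise in the paper's accounting, because the gap between consecutive grid points at distance $(1+\mu)^r$ from $p\bar n$ is exactly $\mu(1+\mu)^r$ and Claim~\ref{clm:defect-gap} gives $(1+\mu)^r \le j_c - p\bar n \le E(x,y)$, so each block loses at most $\mu E(x,y)$ matches outright (modulo the paper's stated convention of treating $(1+\mu)^r$ as an integer).
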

\vspace{-5pt}
\begin{proof} We start with an LCS $L$ of $x$ and $y$ and ``round" it to be $\cS$-consistent.
Let $L$ be $x(i_1), x(i_2), \ldots, x(i_r)$ and $y(j_1), y(j_2), \ldots, y(j_r)$.
Consider some $p \in [n/\bar{n}]$, and let $i_a$ be the smallest index larger than $p\bar{n}$. Set $\ell_p$ to be the largest index in $S_p$ smaller than $j_a$. We construct
a new common sequence $L'$ by removing certain matches from $L$.
Consider a matched pair $(x(i_b), y(j_b))$ in $L$. 
If $i_b \in [p\bar{n},(p+1)\bar{n}]$ and $j_b \leq \ell_{p+1}$, 
then we add this pair to $L'$. Otherwise, it is not added.
Note that $j_b \geq \ell_p$, simply by construction. The new common sequence $L'$
is $\cS$-consistent.

It now remains to bound the defect of $L'$. Consider a matched pair $(x(i_b), y(j_b)) \in L$
that is not present in $L'$. Let $i_b \in [(p-1)\bar{n},p\bar{n}]$. This means that ${j_b} > \ell_{p}$. Let $i_c$ be the smallest index larger than $p\bar{n}$. 
So $\ell_{p}$ is the largest index in $S_{p}$ smaller than $j_c$. Let
$\ell_{p} = p\bar{n} + (1+\mu)^r $. We
have $j_c - p\bar{n} = [(1+\mu)^r, (1+\mu)^{r+1}]$. Since $j_b \in [\ell_p, j_c]$, the total possible values for $j_b$ is at most $(1+\mu)^{r+1} - (1+\mu)^r$ $= \mu(1+\mu)^r$.
By Claim~\ref{clm:defect-gap}, $E(x,y) \geq j_c - p\bar{n} \geq (1+\mu)^r$. The
number of characters of $x$ with indices in $[(p-1)\bar{n},p\bar{n}]$ that are not
in $L'$ is at most $\mu E(x,y)$. The total number of characters of $L'$ not in $L$
is at most $\mu (n/\bar{n}) E(x,y)$ $\leq \delta E(x,y)$.
\qed
\end{proof}

The final claim shows how we to update the set of partial LCS solutions consistent
with the anchor points.
The proof of this claim and the final proof of the main theorem (that puts everything together)
is given in Appendix \ref{sec:lcs-proof}.

\begin{claim} \label{clm:extend} Suppose we are given the lengths of the largest $\cS$-consistent
common subsequences between $x(1,i\bar{n})$ and $y(1,j)$, for all $j \in S_i$. Also,
suppose we have access to $x(i\bar{n},(i+1)\bar{n})$ and $y$. Then, we can compute
the lengths of the largest $\cS$-consistent common sequences between $x(1,(i+1)\bar{n})$
and $y(1,j)$ (for all $j \in S_{i+1}$) using $\bar{n}$ space.
The total running time is $O(n\bar{n}^2)$.
\end{claim}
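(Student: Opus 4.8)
The plan is to implement, in a single block of size $\bar n$, exactly one ``strip'' of the standard LCS dynamic program restricted to $\cS$-consistent matchings. First I would set up the state: we are handed the values $\ell^{(i)}_j \defeq$ (length of the longest $\cS$-consistent common subsequence of $x(1,i\bar n)$ and $y(1,j)$) for every $j \in S_i$, of which there are $O(\bar n)$ by the size bound $|S_i| = O(\mu^{-1}\ln n) = O(\bar n)$. The goal is to produce the analogous $O(\bar n)$ values $\ell^{(i+1)}_{j'}$ for $j' \in S_{i+1}$. The only new input we consume is the block $x(i\bar n,(i+1)\bar n)$ (which is $\bar n$ characters of the stream) and random access to $y$. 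By Definition~\ref{def:proper}, an $\cS$-consistent sequence through the $(i{+}1)$-st block is obtained by fixing an anchor $\ell_i \in S_i$ and an anchor $\ell_{i+1}\in S_{i+1}$ with $\ell_i \le \ell_{i+1}$, keeping only matches of block-$(i{+}1)$ characters of $x$ to positions of $y$ lying in $[\ell_i,\ell_{i+1}]$; so for each ordered pair $(\ell_i,\ell_{i+1})$ the best total is $\ell^{(i)}_{\ell_i}$ plus the (ordinary, unconstrained) LCS of the length-$\bar n$ string $x(i\bar n,(i+1)\bar n)$ with the substring $y(\ell_i,\ell_{i+1})$, and $\ell^{(i+1)}_{\ell_{i+1}}$ is the max of this over all valid $\ell_i$.

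The key steps, in order: (1) For each fixed left anchor $a=\ell_i\in S_i$, run the textbook $O(\bar n\cdot n)$ LCS dynamic program of Claim~\ref{clm:basic} between the block $x(i\bar n,(i+1)\bar n)$ (streamed, length $\bar n$) and the \emph{full} string $y$, obtaining for every $j'\in[n]$ the LCS length of $x(i\bar n,(i+1)\bar n)$ with $y(a,j')$; this needs only $O(n)$ working space (one row of the table) and $O(\bar n n)$ time. (2) For each $j'\in S_{i+1}$ with $j'\ge a$, form the candidate $\ell^{(i)}_a + \mathrm{LCS}(x(i\bar n,(i+1)\bar n),\,y(a,j'))$ and update a running maximum stored at slot $j'$. (3) Loop over all $a\in S_i$, reusing the same $O(n)$ scratch space each time, and output the $O(\bar n)$ accumulated maxima as $\ell^{(i+1)}_{j'}$, $j'\in S_{i+1}$. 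Correctness is the observation above together with a short induction: any $\cS$-consistent sequence for $x(1,(i+1)\bar n)$ decomposes uniquely at the anchor $\ell_i$ into an $\cS$-consistent sequence for $x(1,i\bar n)$ ending at some $y$-position $\le \ell_i$ and an unconstrained matching of the $(i{+}1)$-st $x$-block into $y(\ell_i,\ell_{i+1}]$, and conversely every such pair yields an $\cS$-consistent sequence; taking maxima over $\ell_i\in S_i$ is exactly what the DP computes.

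For the resource bounds: the scratch row for step (1) is $O(n)$ words, which dominates the claimed $O(\bar n)$ space only if $n = O(\bar n)$ — so in fact I would refine step (1) to run the block-vs-$y$ LCS in a streaming-over-$y$ fashion that keeps only the $O(\bar n)$ entries we actually need (the values at the $j'\in S_{i+1}$, plus the single advancing column of length $\bar n$ over the block), giving $O(\bar n)$ space; the point is that although the DP sweeps all of $y$, at any moment we store only one length-$\bar n$ column and the $O(\bar n)$ output cells. Time: $O(\bar n n)$ per left anchor $a$, and $|S_i|=O(\bar n)$ anchors, for a total of $O(\bar n^2 n)$ as claimed. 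The main obstacle is precisely this space-accounting subtlety — arranging the inner LCS computations so the transient $O(n)$-size row is collapsed to $O(\bar n)$ by exploiting that we only ever read off values at anchor positions and only ever need one column of the $\bar n\times n$ table at a time; once the bookkeeping is set up, correctness is the routine decomposition argument sketched above and the time bound is immediate.
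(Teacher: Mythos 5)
Your proof is correct and takes essentially the same approach as the paper: both decompose an $\cS$-consistent CS through block $i{+}1$ at the anchor pair $(\ell_i,\ell_{i+1})$, and both obtain the increment by running a $\bar n\times n$ LCS DP (keeping a single length-$\bar n$ column at a time, which gives the $O(\bar n)$ space) and reading off values at the $O(\bar n)$ anchor positions. The only cosmetic difference is that you loop over the left anchor $a\in S_i$ and sweep $y$ forward, whereas the paper loops over the right anchor $j\in S_{i+1}$ and sweeps $y(1,j)$ in reverse; these are symmetric and give the same $O(n\bar n^2)$ time.
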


\section{Acknowledgements} The second author would like to thank Robi Krauthgamer and David Woodruff
for useful discussions. He is especially grateful to Ely Porat with whom he discussed LCS to LIS reductions.

\bibliographystyle{alpha}
\bibliography{streaming_lis}

\appendix

\section{The space bound of Theorem \ref{thm:analysis}} \label{sec:space}

The following claim shows that the probability
that $|R_t|$ exceeds the space bound is at most $\gamma/2n$. A union bound
over all $t$ proves the third assertion of Theorem \ref{thm:analysis}.

\begin{claim} \label{clm:space} Let $M=\frac{2}{\delta} \ln(4n^3/\gamma)\ln(e\rho)$.
Fix $t \in [n]$. Then $\prob[|R_t| \geq  e^2M] \leq \gamma/2n$.
%
%
\end{claim}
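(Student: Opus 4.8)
The plan is to bound $|R_t|$ by writing it as a sum of indicator random variables $X_i = \mathbf{1}[i \in R_t]$ for $i \in [t]$, and applying a Chernoff-type tail bound. By Proposition \ref{forget prob}, $\prob[X_i = 1] = q(i,t)$, and crucially the events $\{i \in R_t\}$ are \emph{independent} across $i$, since each index $i$ is discarded according to its own independent coin flips and the coin flips for distinct indices never interact. So $|R_t| = \sum_{i=1}^{t} X_i$ is a sum of independent Bernoulli variables, and I would use the multiplicative Chernoff bound in the form $\prob[|R_t| \geq a] \leq 2^{-a}$ whenever $a \geq 2e\,\EX[|R_t|]$ (or any convenient variant that gives exponential decay past a constant multiple of the mean).

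The next step is to estimate $\EX[|R_t|] = \sum_{i=1}^{t} q(i,t)$. Recall $q(i,t) = \min\{1, \frac{1+\delta}{\delta}\ln(4t^3/\gamma)\, \frac{w(i)}{W(t)-W(i-1)}\}$. Dropping the $\min$ (upper-bounding by the second term) and noting $\frac{1+\delta}{\delta} \leq \frac{2}{\delta}$ for $\delta \in (0,1]$, this is at most $\frac{2}{\delta}\ln(4n^3/\gamma) \sum_{i=1}^{t} \frac{w(i)}{W(t)-W(i-1)}$. The sum $\sum_{i=1}^{t} \frac{w(i)}{W(t)-W(i-1)}$ is a discrete analogue of $\int \frac{dx}{x}$: since $W(t) - W(i-1) = \sum_{j \geq i} w(j)$ decreases by exactly $w(i)$ as $i$ increments, one gets $\sum_i \frac{w(i)}{W(t)-W(i-1)} \leq \ln(W(t)) + O(1) \leq \ln(e\rho)$ (the weights being positive integers, $W(t) \leq \rho$, and the smallest nonzero value of the denominator is at least $1$). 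Hence $\EX[|R_t|] \leq \frac{2}{\delta}\ln(4n^3/\gamma)\ln(e\rho) = M$.

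Putting the two pieces together: $\prob[|R_t| \geq e^2 M] \leq \prob[|R_t| \geq e^2 \EX[|R_t|]]$, and since $e^2 \geq 2e$, the Chernoff bound gives $\prob[|R_t| \geq e^2 M] \leq 2^{-e^2 M}$. It remains to check $2^{-e^2 M} \leq \gamma/(2n)$, i.e. $e^2 M \geq \log_2(2n/\gamma)$; since $M \geq \frac{2}{\delta}\ln(4n^3/\gamma) \geq 2\ln(4n^3/\gamma)$ and $e^2 > 7$, the quantity $e^2 M$ comfortably exceeds $\log_2(2n/\gamma)$, so the bound holds. A union bound over $t \in [n]$ then yields the third assertion of Theorem \ref{thm:analysis}.

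The main obstacle I anticipate is being careful with the harmonic-sum estimate $\sum_{i=1}^{t} \frac{w(i)}{W(t)-W(i-1)} \leq \ln(e\rho)$ when the weights are arbitrary positive integers rather than all $1$: one wants to compare the sum to $\int_{1}^{W(t)} \frac{dx}{x}$ by grouping the contribution of index $i$ against the interval $(W(i-1), W(i)]$, using that $\frac{w(i)}{W(t)-W(i-1)} \leq \sum_{j=W(i-1)+1}^{W(i)} \frac{1}{W(t)-W(i)+ (j - W(i-1))} \le \int_{W(t)-W(i)}^{W(t)-W(i-1)}\frac{dx}{x}$ or a similar telescoping argument. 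This monotonicity/integral-comparison step is where the constant $e$ inside $\ln(e\rho)$ comes from, and it is the one place the proof needs genuine (if routine) care rather than a direct substitution.
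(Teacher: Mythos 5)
Your proposal matches the paper's proof in substance: both decompose $|R_t|$ as a sum of independent indicators, bound $\EX[|R_t|] = \sum_i q(i,t)$ by $M$ via the same harmonic-sum estimate, and invoke the Alon--Spencer Poisson-type tail bound $\prob[Z \geq C] \leq (e\mu/C)^C$ (your $2^{-a}$-for-$a\geq 2e\mu$ formulation is just a corollary of it) to conclude. The only methodological difference is in the harmonic-sum step: the paper proves $\sum_{i=r}^t w(i)/(W(t)-W(i-1)) \leq \ln\bigl(e(W(t)-W(r-1))/w(t)\bigr)$ by backwards induction on $r$, using $-\ln(1-x)\geq x$ at each step, whereas you propose a direct integral comparison. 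These are equivalent; with the integral route you must split off the $i=t$ summand (which contributes exactly $1$, since $W(t)-W(t-1)=w(t)$) to avoid the $\int_0 dx/x$ divergence, which is precisely what supplies the extra $e$ inside $\ln(e\rho)$ --- you correctly flag this as the one place requiring care, and the paper's induction base case handles it cleanly. One small infelicity: the intermediate step $\prob[|R_t| \geq e^2 M] \leq \prob[|R_t| \geq e^2\EX[|R_t|]]$ is a detour you don't need and shouldn't use --- applying the tail bound to the right-hand side would give $2^{-e^2\mu}$, which does \emph{not} imply $2^{-e^2M}$ when $\mu \leq M$; you should instead apply the tail bound directly with threshold $a = e^2M \geq 2e\mu$, which is what you do in the next clause anyway. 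With that redundant inequality removed the argument is correct.
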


\begin{proof}
For $i \in [t]$ let $Z_i=1$ if $i \in R_t$ and $0$ otherwise.  Then $|R_t|=\sum_{i \leq t} Z_i$.  Let $\mu=\mathbb{E}[|R_t|]$. 
Below we show that $\mu \leq M$.
We need the following tail bound (which is equivalent to the bound of \cite{AlonS}, Theorem A.12):

\begin{proposition}
Let $Z_1,\ldots,Z_m$ be independent 0/1-valued random variables, let $Z=\sum_i Z_i$, and
let $\mu=\mathbb{E}[Z]$.  Then for any $C \geq 0$,
$\prob[Z \geq C] \leq (e\mu/C)^{C}$.
\end{proposition}  

Applying this proposition with $C=e^2M$ gives $\prob[|R_t| \geq e^2M] \leq e^{-C}$ which is at most $\gamma/2n$ (with a lot
of room to spare). It remains to show that $\mu \leq M$.
We have:
\begin{eqnarray*} \mu & = & \sum_{i=1}^t \mathbb{E}[Z_i] = \sum_{i=1}^t q(i,t) \\
& & \leq \frac{2}{\delta}\ln(4n^3/\gamma) \sum_{i=1}^t w(i)/(W(t)-W(i-1). 
\end{eqnarray*}
%
%
%
We note the following fact.
\begin{proposition}
For $r \geq 1$,
$\sum_{i=r}^t w(i)/(W(t)-W(i-1)) \leq \ln(\frac{e(W(t)-W(r-1))}{w(t)})$.
\end{proposition}
\begin{proof} We prove by backwards induction on $r$. For $r=t$,
the left side is $w(t)/(W(t)-W(t-1)) = 1$, the same as the right side.
Assume up to $r \geq 2$, and we shall prove the statement for $r-1$.
We start with a technical statement.
\begin{eqnarray*}
	& & \ln(\frac{W(t) - W(r-2)}{W(t) - W(r-1)}) \\
	& = & \ln(\frac{W(t) - W(r-2)}{(W(t) - W(r-2)) - w(r-1)}) \\
	& = & - \ln(1 - w(r-1)/(W(t) - W(r-2))) \\
	& \geq & w(r-1)/(W(t) - W(r-2))
\end{eqnarray*}
Combining the induction hypothesis with this inequality,
\begin{eqnarray*}
	& & \sum_{i=r-1}^t \frac{w(i)}{W(t)-W(i-1)} \\
	& = & \sum_{i=r}^t \frac{w(i)}{W(t)-W(i-1)} + \frac{w(r-1)}{W(t) - W(r-2)} \\
  & \leq & \ln(\frac{e(W(t)-W(r-1))}{w(t)}) + \ln(\frac{W(t) - W(r-2)}{W(t) - W(r-1)}) \\
  & \leq & \ln(\frac{e(W(t)-W(r-2))}{w(t)})
\end{eqnarray*}
\qed
\end{proof}

Thus $\sum_{i=1}^t w(i)/(W(t)-W(i-1) \leq \ln(eW(t)/w(t)) \leq \ln(e\rho)$, 
and so $\mu \leq M$. This completes the proof.
\qed
\end{proof}

\section{Proofs from Section \ref{sec:lcs}} \label{sec:lcs-proof}

We first prove another claim from which the proof of Claim~\ref{clm:defect-gap} follows.	

\begin{claim} \label{clm:defect} Given a common subsequence 
$x(i_1), x(i_2), \ldots, x(i_r)$ and $y(j_1), y(j_2), \ldots, y(j_r)$, 
the defect is at least $\max_{k \leq r} (|i_k - j_k|)$.
\end{claim}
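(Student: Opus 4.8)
The plan is to prove the slightly more general statement: given a common subsequence realized by matched pairs $(x(i_1),y(j_1)),\ldots,(x(i_r),y(j_r))$ with $i_1<\cdots<i_r$ and $j_1<\cdots<j_r$, for every $k\le r$ the defect $n-r$ is at least $|i_k-j_k|$. The key observation is a simple counting/injectivity argument on one side of the matching, applied in two symmetric cases depending on the sign of $i_k-j_k$.

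First I would fix $k$ and consider the case $j_k \ge i_k$ (the other case is symmetric, swapping the roles of $x$ and $y$). The matched indices on the $y$ side that occur at position $k$ or earlier are $j_1<j_2<\cdots<j_k$, and since these are $k$ distinct integers all lying in $[1,j_k]$, while $j_k \ge i_k$, only $i_k$-many ``slots'' among them could be ``used up'' by the first $i_k$ positions — more precisely, $j_1,\ldots,j_k$ are $k$ distinct values in $\{1,\ldots,j_k\}$, so the number of indices in $\{1,\ldots,j_k\}$ that are \emph{not} among $\{j_1,\ldots,j_k\}$ is exactly $j_k-k \ge j_k - k$. Combining with $i_k\le k$ (which holds because $i_1<\cdots<i_k$ forces $i_k\ge k$ — wait, this gives the opposite), I should instead argue directly: $j_k - k \ge j_k - (\text{number of matched }j\text{'s}) $, hmm. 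The cleanest route: since $i_1<\cdots<i_k$ we have $i_k\ge k$, so $k\le i_k\le j_k$; the $y$-indices in $\{1,\ldots,j_k\}$ that are unmatched number at least $j_k - k \ge j_k - i_k = |i_k-j_k|$; and every unmatched index of $y$ contributes to $n - r$ (since $r$ equals the number of matched indices in $y$, and $y$ has length $n$). Hence $n-r \ge j_k - i_k$. The case $i_k > j_k$ is identical with $x$ in place of $y$: the unmatched indices of $x$ in $\{1,\ldots,i_k\}$ number at least $i_k - k \ge i_k - j_k$, using $j_k\ge k$.

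Then Claim~\ref{clm:defect-gap} follows quickly: given the common subsequence and an arbitrary index $i\in[n]$, pick $a$ minimal with $i_a\ge i$. Apply Claim~\ref{clm:defect} with $k=a$ to get $n-r \ge |i_a - j_a|$. It remains to relate $|i_a-j_a|$ to $|j_a - i|$; here I would note that $i_a \ge i$ by choice of $a$, and use the defect bound together with the minimality of $a$ — all indices $i_1,\ldots,i_{a-1}$ are $<i$, so at most $i-1-(a-1)=i-a$ of the positions in $\{1,\ldots,i-1\}$ are unmatched on the $x$ side, which bounds how much the matching can have ``shifted'' by position $a$; combined with $n-r\ge i_a - j_a$ (if $i_a\ge j_a$) or the $y$-side count, one gets $n-r\ge |j_a-i|$.

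The main obstacle I anticipate is bookkeeping the off-by-one relationship between $|i_k-j_k|$ and $|j_a-i|$ in the reduction to Claim~\ref{clm:defect-gap}: one must be careful about whether $i_a\ge j_a$ or $i_a<j_a$ and how the ``wasted'' early positions interact with the sign, but this is purely routine inequality-chasing once the injectivity/counting lemma of Claim~\ref{clm:defect} is in hand. The genuinely substantive content — that an unmatched index on the shorter-prefix side always exists in the required quantity — is the simple pigeonhole argument above.
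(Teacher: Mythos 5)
Your pigeonhole count — exactly $k$ of the $y$-indices in $\{1,\dots,j_k\}$ are matched, and $k\le i_k$ since the $i$'s strictly increase, so at least $j_k-i_k$ of the $n-r$ unmatched $y$-positions lie in that prefix — is correct and is essentially the paper's argument: the paper splits the common subsequence at the $k$-th match and bounds the two pieces by $\min(i_k,j_k)$ and $\min(n-i_k,n-j_k)$, which is the same pair of inequalities ($k\le\min(i_k,j_k)$ and $r-k\le n-\max(i_k,j_k)$) in a different guise. The mid-paragraph hesitation in your write-up resolves correctly, and the sign of $i_k-j_k$ is handled symmetrically in both versions.
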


\begin{proof} (of Claim~\ref{clm:defect}) Assume wlog that $i_k \geq j_k$. Since the $i_k$th character of $x$
is matched to $j_k$th character of $y$, the length of this common
subsequence is at most $LCS(x(1,i_k), y(1,j_k)) + LCS(x(i_k+1,n), y(j_k+1,n))$.
This can be bounded above trivially by $j_k + (n-i_k) = n - (i_k - j_k)$.
Hence the defect is at least $i_k - j_k$. Repeating over all $k$, we complete the proof.
\qed
\end{proof}

\medskip

\begin{proof} (of Claim~\ref{clm:defect-gap}) The defect is at least $|j_a - i_a|$ (by Claim~\ref{clm:defect}) and
is also at least $|i_a - i|$ (by definition of $i_a$). If either $j_a \in [i,i_a]$
or $i \in [j_a,i_a]$, then the defect is certainly at least $|j_a - i|$. Suppose
neither of these are true. Then $j_a > i_a \geq i$. Let us focus on the characters
of $x$ that are not matched. No character of $x$ with index in $[i,i_a)$ is matched.
The characters in $(i_a,n]$ can only be matched to characters of $y$ in $(j_a,n]$
(since $(x(i_a),y(j_a))$ is a match). So the number of characters in $(i_a,n]$
that are \emph{not matched} is at least $(n-i_a) - (n-j_a)$ $=(j_a - i_a)$.
So the number of unmatched characters in $x$ is at least $j_a - i$.
\qed\\
\end{proof}

\begin{proof} (of Claim~\ref{clm:extend}) Consider some $j \in S_{i+1}$, and set $\bar{x} = x(i\bar{n},(i+1)\bar{n})$. We wish to compute the largest $\cS$-consistent
CS between in $x(1,(i+1)\bar{n})$ and $y(1,j)$. Suppose we look at the portion
of this CS in $x(1,i\bar{n})$. This forms a $\cS$-consistent
sequence between $\bar{x}$ and $y(1,j')$, for some $j' \in S_i$.
The remaining portion of the CS is just the LCS between $\bar{x} = x(i\bar{n},(i+1)\bar{n})$
and $y(j',j)$. Hence, given the LCS length of $\bar{x}$ and $y(j',j)$,
for all $j' \in S_i$, we can compute the length of the largest $\cS$-consistent
CS between $x(1,(i+1)\bar{n})$ and $y(1,j)$. This is obtained by just
maximizing over all possible $j'$.

We now apply Claim~\ref{clm:basic}. We have $\bar{x}$ in hand, and stream
in reverse order through $y(1,j)$. Using $O(\bar{n})$ space, we can
compute all the LCS lengths desired. This gives the length of the largest
$\cS$-consistent CS that ends at $y(j)$. This can be done for all $y(j)$, $j \in S_i$.
The total running time is $O(|S_{i+1}|n\bar{n}) = O(n\bar{n}^2)$.
\qed\\
\end{proof}

\begin{proof} (of Theorem~\ref{thm:lcs}) Our streaming algorithm will compute the length of the longest $\cS$-consistent
CS. Consider the index $i\bar{n}$. Suppose we have currently stored the
lengths of the largest $\cS$-consistent CS between $x(1,i\bar{n})$ and
$y(1,j)$, for all $j \in S_i$. This requires space $O(|S_i|) = O(\bar{n}))$.
By Claim~\ref{clm:extend}, we can compute 
the corresponding lengths for $S_{i+1}$ using an additional $O(\bar{n})$ space.
Hence, at the end of the stream, we will have the length (and defect) of the longest
$\cS$-consistent CS. Lemma~\ref{lem:proper} tells us that this defect is a $(1+\delta)$-
approximation to $E(x,y)$. The space bound is $O(\bar{n})$.

The number of updates is $O(n/\bar{n}) = O(\sqrt{(\delta n)/\ln n})$, and the time for each update is $O(n{\bar{n}}^2) = O((n^2\ln n)/\delta)$.
\qed
\end{proof}

\end{document}